\def\baselinestretch{1.1}
\newtheorem{thm}{Theorem}[section]
\newtheorem{dfn}[thm]{Definition}
\newtheorem{prop}[thm]{Proposition}
\newtheorem{lem}[thm]{Lemma}
\newtheorem{exmpl}[thm]{Example}
\newtheorem{obs}[thm]{Remark}
\def\beq{\begin{equation}}
\def\eeq{\end{equation}}
\def\bea{\begin{eqnarray}}
\def\eea{\end{eqnarray}}
\def\beann{\begin{eqnarray*}}
\def\eeann{\end{eqnarray*}}
\def\ben{\begin{enumerate}}
\def\een{\end{enumerate}}
\def\bit{\begin{itemize}}
\def\eit{\end{itemize}}
\def\derpar#1#2{\frac{\partial{#1}}{\partial{#2}}}
\newcommand\restr[2]{{
  \left.\kern-\nulldelimiterspace 
  #1 
  \right|_{#2} 
}}
\newcommand{\R}{\mathbb{R}}
\renewcommand{\d}{\mathrm{d}}
\renewcommand{\L}{\mathcal{L}}
\renewcommand{\H}{\mathcal{H}}
\newcommand{\vf}{\mathfrak{X}}
\newcommand{\df}{\Omega}
\newcommand{\Cinfty}{\mathscr{C}^\infty}
\newcommand{\Tan}{\mathrm{T}}
\newcommand{\inn}{i}
\newcommand{\Lie}{\mathscr{L}}
\newcommand{\X}{\mathfrak{X}}
\newcommand{\Reeb}{\mathcal{R}}
\def\d{\mathrm{d}}
\let\ds\displaystyle
\title{\sc
A contact geometry framework\\
for field theories with dissipation}
\author{\sffamily 
$^a$Jordi Gaset, 
$^b$Xavier Gr\`acia, 
$^b$Miguel C. Mu\~noz-Lecanda,
$^b$Xavier Rivas and 
$^b$Narciso Rom\'an-Roy%
\thanks{emails: 
jordi.gaset@uab.cat,
xavier.gracia@upc.edu,
miguel.carlos.munoz@upc.edu,
xavier.rivas@upc.edu,
narciso.roman@upc.edu}
\\[1ex]
\normalsize\itshape\sffamily 
$^a$Department of Physics,
Universitat Aut\`onoma de Barcelona\\
\normalsize\itshape\sffamily 
Bellaterra, Catalonia, Spain
\\[0.1ex]
\normalsize\itshape\sffamily 
$^b$Department of Mathematics,
Universitat Polit\`ecnica de Catalunya\\
\normalsize\itshape\sffamily 
Barcelona, Catalonia, Spain
}
\date{\sffamily 
18 February 2020}
\begin{document}
\maketitle

\kern -10mm
\begin{abstract}
\noindent
We develop a new geometric framework suitable for dealing with
Hamiltonian field theories with dissipation.
To this end we define the notions of
$k$-contact structure
and $k$-contact Hamiltonian system.
This is a generalization of both
the contact Hamiltonian systems in mechanics and the
$k$-symplectic Hamiltonian systems in field theory.
The concepts of symmetries and dissipation laws are introduced and developed. 
Two relevant examples are analyzed in detail: 
the damped vibrating string and Burgers' equation.
\end{abstract}

\noindent\textbf{Keywords:}
contact structure,
Hamiltonian field theory, 
$k$-symplectic structure,
De Donder--Weyl theory,
system with dissipation,
Burgers' equation

\noindent\textbf{MSC\,2010 codes:}
53C15, 
53D10,
58A10,
70S05, 
70S10,
35R01



\medskip
\setcounter{tocdepth}{2}
{
\def\baselinestretch{0.97}
\small
\def\addvspace#1{\vskip 1pt}
\parskip 0pt plus 0.1mm
\tableofcontents
}

\section{Introduction}

In the last decades, the methods of geometric mechanics and field theory
have been widely used in order to give a geometrical description of a
large variety of systems in physics and applied mathematics;
in particular, those of symplectic and multisymplectic or $k$-symplectic (polysymplectic) geometry
(see, for instance, \cite{abraham1978,Arnold1989,Carinena1991,DeLeon2015,Libermann1987,Roman2011,Roman2009} and references therein).
All these methods are developed, in general, to model systems
of variational type; that is, without dissipation or damping, both
in the Lagrangian and Hamiltonian formalisms.

Furthermore, in recent years, there has been a growing interest in studying 
a geometric framework to describe dissipative or damped systems,
specifically using contact geometry 
\cite{Banyaga2016,Geiges2008,Kholodenko2013}.
The efforts have been focused in the
study of mechanical systems \cite{Bravetti2017b,Bravetti2017a,Ciaglia2018,DeLeon2019,DeLeon2016b,GGMRR-2019b,Lainz2018}.
All of them are described by ordinary differential equations
to which some terms that account for the dissipation or damping have been added.
Contact geometry has other physical applications,
as for instance thermodynamics 
\cite{Bravetti2018}.

Nevertheless, up to our knowledge,
the analysis of systems of partial differential equations with dissipation terms, that is in field theory,
has not yet been done geometrically.
Our basic geometrical model for classical field theories without dissipation is the $k$-symplectic framework
 \cite{DeLeon2015},
which is the simplest extension of the symplectic formulation of autonomous mechanics
to field theory.
In this way, the aim of this paper is to develop an extension of the contact geometry
in order to create a geometrical framework
to deal with these kinds of systems when dissipation is present.

As it is well-known, 
a simple contact structure can be defined 
starting from an exact symplectic manifold 
$(N,\omega = -\d\theta)$,
taking the product 
$M = N \times \R$,
and endowing this manifold with the contact form
$\eta = \d s - \theta$
(where $s$ is the cartesian coordinate of~$\R$).
So, 
the \emph{contactification} of the symplectic structure
is obtained by the addition of a contact variable~$s$ \cite[appendix~4]{Arnold1989}.
Given a Hamiltonian function $H \colon M \to \R$,
one defines the contact Hamilton equations,
analogous to the usual Hamilton equations
but with a dissipation term originating from the new variable.

Now we move to field theory, 
and more specifically to the De Donder--Weyl covariant formulation
of Hamiltonian field theory.
Aiming to introduce dissipation terms in the
Hamilton--De Donder--Weyl equations,
one realizes that we need to introduce 
$k$ contact variables $s^\alpha$,
where $k$ is the number of independent variables
(for instance the space-time dimension).
In the autonomous case,
the De Donder--Weyl formulation of Hamiltonian field theory
can be geometrically modeled with 
the notion of $k$-symplectic structure,
that is,
a family of $k$ differential 2-forms $\omega^\alpha$
satisfying some conditions.

These considerations lead us to define the concept of
$k$-contact structure on a manifold~$M$,
as a family of $k$ differential 1-forms $\eta^\alpha$ 
satisfying certain properties.
This structure implies the existence of two special
tangent distributions;
one of them spanned by $k$ Reeb vector fields
which will be instrumental 
in the formulation of field equations.
Then a $k$-contact Hamiltonian system
is defined as a $k$-contact manifold
endowed with a Hamiltonian function $\H$.
This structure allows us to define
$k$-contact Hamilton equations,
which are a generalization of 
the De Donder--Weyl Hamiltonian formalism,
and enables us to describe field theories with dissipation.
After that, we can study the symmetries for these Hamiltonian systems and, associated to them, the notion of dissipation law, which is characteristic of dissipative systems, and is analogous to the conservation laws of conservative systems.

The relevance of our framework 
is illustrated with two noteworthy examples:
the vibrating string with damping, 
and Burgers' equation.
The Lagrangian formulation of the 
(undamped) vibrating string is well known, 
and from its Hamiltonian counterpart
and a contactification procedure
we obtain its field equation with a damping term.
The case of Burgers' equation is more involved.
First we consider the heat equation;
although this equation is not variational,
the introduction of an auxiliary variable 
allows to describe the heat equation 
in Lagrangian terms.
From this we have provided a Hamiltonian field theory
that still describes the heat equation.
Finally, 
an appropriate contactification of this equation
yields Burgers' equation.
In both cases, there are different $k$-contact structures (with $k=2$), hidden in the standard treatment,
but uncovered in this geometric formulation.

The paper is organized as follows.
Section \ref{prel} is devoted to review briefly several preliminary concepts on
contact geometry and contact Hamiltonian systems in mechanics,
as well as on $k$-symplectic manifolds
and $k$-symplectic Hamiltonian systems in field theory.
In Section \ref{kcontact}
we introduce the definition of $k$-contact structure,
we give the basic definitions and properties of $k$-contact manifolds, and
we include a version of the Darboux theorem for a particular type of these manifolds.
In Section \ref{sect-hamsyst} we set a
geometric framework for Hamiltonian field theories 
with dissipation on a $k$-contact manifold
and state the geometric form of the contact Hamilton--De Donder--Weyl equations
in several equivalent ways.
Section \ref{examples} is devoted to study some relevant examples, 
in particular, the damped vibrating string
and the Burgers equation.
Finally, in Section \ref{sect-symms} 
we introduce two concepts of symmetry and
the relations between them, and the notions of
dissipation laws
for these kinds of systems.

Throughout the paper all the manifolds and mappings are assumed to be smooth. 
Sum over crossed repeated indices is understood.

\section{Preliminaries}
\label{prel}

In Section 3 we will introduce the notion of $k$-contact structure.
It is based on the notions of contact and $k$-symplectic structures, 
which we review in this section.

\subsection{Contact manifolds and contact Hamiltonian systems}

The geometry of contact manifolds is described in several books. We are interested, in particular, in contact Hamiltonian systems, 
see for instance
\cite{Bravetti2017b,Bravetti2018,Bravetti2017a,Ciaglia2018,DeLeon2019,Geiges2008,Lainz2018}.
			
\begin{dfn}\label{dfn-contact-manifold}
Let $M$ be a $(2n+1)$-dimensional  manifold. 
A \textbf{contact form} in $M$ is a differential $1$-form
$\eta\in\df^1(M)$ such that $\eta\wedge(\d\eta)^{\wedge n}$
is a volume form in $M$.
Then, $(M,\eta)$ is said to be a \textbf{contact manifold}.
\end{dfn}

\begin{obs}{\rm
Notice that the condition that
$\eta\wedge(\d\eta)^{\wedge n}$ is a volume form is equivalent to demand that
$$
\Tan M=\ker\,\eta\oplus\ker\,\d\eta\ .
$$}
\end{obs}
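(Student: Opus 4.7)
The plan is to verify this equivalence pointwise at each $p\in M$, since both conditions are fiberwise statements about the linear algebra of $T_pM$.

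For the forward implication, assume $(\eta\wedge(\d\eta)^{\wedge n})_p\neq 0$. Then $\eta_p\neq 0$, so $\ker\eta_p$ is a $2n$-dimensional hyperplane in $T_pM$. The key computation is the following: choose $v\in T_pM$ with $\eta_p(v)=1$ and any basis $e_1,\dots,e_{2n}$ of $\ker\eta_p$; expanding the wedge product (only the term placing $v$ into the $\eta$-slot survives) gives
\[
(\eta\wedge(\d\eta)^{n})_p(e_1,\dots,e_{2n},v)=\bigl((\d\eta)_p|_{\ker\eta_p}\bigr)^{n}(e_1,\dots,e_{2n}),
\]
so $\eta\wedge(\d\eta)^{n}$ is a volume form precisely when $(\d\eta)_p|_{\ker\eta_p}$ is a symplectic form on $\ker\eta_p$. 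From this symplecticity I deduce both $\ker(\d\eta)_p\cap\ker\eta_p=0$ (anything in the intersection lies in the radical of the restricted form) and $\dim\ker(\d\eta)_p=1$ (since $(\d\eta)_p$ then has the maximal possible rank $2n$ on the $(2n+1)$-dimensional $T_pM$). A dimension count yields $T_pM=\ker\eta_p\oplus\ker(\d\eta)_p$.

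For the converse, assume the decomposition holds. Any alternating 2-form on an odd-dimensional space has nontrivial kernel, so $\ker(\d\eta)_p\neq 0$; transversality to $\ker\eta_p$ then forces $\ker\eta_p\neq T_pM$, i.e.\ $\eta_p\neq 0$, and hence $\dim\ker(\d\eta)_p=1$ and $\dim\ker\eta_p=2n$. Transversality also forces the radical of $(\d\eta)_p|_{\ker\eta_p}$ to be trivial, so the restriction is symplectic on the hyperplane $\ker\eta_p$. The displayed identity above then gives $(\eta\wedge(\d\eta)^n)_p\neq 0$.

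The only non-routine step is the pointwise linear-algebra equivalence between nonvanishing of $\eta\wedge(\d\eta)^n$ and nondegeneracy of $(\d\eta)_p$ on the hyperplane $\ker\eta_p$; the rest is dimension counting together with the standard fact that the restriction of a 2-form to a subspace transverse to its radical has trivial radical.
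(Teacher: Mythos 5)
Your argument is correct. The paper states this equivalence as a remark without proof, so there is no in-text argument to compare against; your pointwise reduction — showing that nonvanishing of $\eta\wedge(\d\eta)^{\wedge n}$ at $p$ is equivalent to nondegeneracy of $(\d\eta)_p$ on the hyperplane $\ker\eta_p$, and then passing between that and the splitting $T_pM=\ker\eta_p\oplus\ker(\d\eta)_p$ by the parity of the rank of a skew form and the standard fact about complements of the radical — is exactly the standard linear-algebra proof one finds in the contact-geometry literature, and all the steps (including the handling of the degenerate case $\eta_p=0$ in the converse) are sound.
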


\begin{prop}
Let $(M,\eta)$ be a contact manifold. 
Then there exists a unique vector field $\Reeb\in\vf(M)$, 
which is called the \textbf{Reeb vector field}, 
such that
$$
	i({\Reeb})\eta =1 \,,\quad i({\Reeb})\d\eta = 0 \ .
$$
\end{prop}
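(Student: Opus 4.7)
The plan is to exploit the direct-sum decomposition $\Tan M=\ker\eta\oplus\ker\d\eta$ mentioned in the preceding remark, and then define $\Reeb$ as the unique vector field in $\ker\d\eta$ normalized by the condition $i(\Reeb)\eta=1$.

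First I would establish the pointwise algebraic picture. Since $\eta\wedge(\d\eta)^{\wedge n}$ is a volume form, $\eta$ is nowhere vanishing, so $\ker\eta$ is a distribution of rank $2n$. The same volume condition forces $(\d\eta)^{\wedge n}$ to restrict to a volume form on $\ker\eta$, so $\d\eta|_{\ker\eta}$ is a symplectic form and in particular nondegenerate on $\ker\eta$. A simple rank count then shows that $\d\eta$ has rank $2n$, hence $\ker\d\eta$ is a line bundle. The direct-sum decomposition from the remark then follows (and will be used repeatedly below).

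Next I would construct $\Reeb$. Working locally on a trivializing open set $U$ for the line bundle $\ker\d\eta$, pick a nowhere-vanishing section $X$. I would first check that $i(X)\eta$ is nowhere zero on $U$: if it vanished at some point, then at that point $X\in\ker\eta\cap\ker\d\eta=\{0\}$, contradicting that $X$ is nowhere zero. Hence I can set $\Reeb := X / i(X)\eta$ on $U$, which by construction satisfies $i(\Reeb)\eta=1$ and $i(\Reeb)\d\eta=0$.

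Finally I would address uniqueness and globalization. If $\Reeb_1$ and $\Reeb_2$ both satisfy the two defining conditions, then $\Reeb_1-\Reeb_2$ lies in $\ker\eta\cap\ker\d\eta=\{0\}$, so $\Reeb_1=\Reeb_2$. This uniqueness implies that the local vector fields built above on different charts automatically agree on overlaps, hence glue to a global smooth vector field $\Reeb\in\vf(M)$. Smoothness on each chart is clear from the formula $\Reeb=X/i(X)\eta$. The main (mild) obstacle is the initial linear-algebra step of extracting the rank-one subbundle $\ker\d\eta$ and verifying its transversality to $\ker\eta$; once this is in place the rest is a short normalization argument.
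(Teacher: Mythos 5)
Your proof is correct. Note that the paper states this proposition without proof (it is quoted as a standard preliminary), so the natural point of comparison is the paper's proof of its $k$-contact generalization, Theorem 3.6. There the authors work on the cotangent side: they use the splitting $\Tan^*M=\mathcal{C}^{\mathrm{C}}\oplus\mathcal{C}^{\mathrm{R}}$, complete $\eta$ (resp.\ the $\eta^\alpha$) to a local coframe, and read off the Reeb field(s) as part of the dual frame, with the normalization $i(\Reeb)\eta=1$ built into duality. You instead work on the tangent side: you extract the line bundle $\ker\widehat{\d\eta}$ from the constant-rank condition, check its transversality to $\ker\widehat{\eta}$, and rescale a local nonvanishing section by $1/(i(X)\eta)$. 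The two arguments are dual to one another and both ultimately rest on the same fact, $\ker\widehat{\eta}\cap\ker\widehat{\d\eta}=\{0\}$, which also gives uniqueness in either version; your route has the advantage of making the normalization completely explicit and elementary, while the paper's dual-frame construction is the one that scales cleanly to $k$ forms, where one must produce $k$ commuting Reeb fields simultaneously. One small point worth making explicit in your write-up: the smoothness of the distribution $\ker\widehat{\d\eta}$ (hence the existence of a smooth local nonvanishing section $X$) follows because $\widehat{\d\eta}$ has constant rank $2n$, which you have established pointwise; with that remark included the argument is complete.
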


The local structure of contact manifolds is given by the following theorem:

\begin{thm}[Darboux theorem for contact manifolds (\cite{Go-69} p.\,118)]
Let $(M,\eta)$ be a contact manifold. 
Then around each point of $M$ there exist an open set with local coordinates 
$(q^i,p_i, s)$ with $1\leq i\leq n$ 
such that
\begin{equation*}
	\eta = \d s- p_i\,\d q^i \  .
\end{equation*}
These are the so-called \textbf{Darboux} or \textbf{canonical coordinates} 
of the contact manifold $(M,\eta)$.
\end{thm}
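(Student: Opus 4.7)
The plan is to reduce the statement to the classical symplectic Darboux theorem, using the flow of the Reeb vector field $\Reeb$ to single out one distinguished coordinate. Fix a point $x_0\in M$; since $\Reeb(x_0)\neq 0$ by $\inn(\Reeb)\eta=1$, I can choose a local hypersurface $N\subset M$ through $x_0$ transverse to $\Reeb$. The flow $\varphi_s$ of $\Reeb$ then yields a local diffeomorphism $N\times(-\varepsilon,\varepsilon)\to U$, $(y,s)\mapsto\varphi_s(y)$, in whose coordinates $\partial_s=\Reeb$.

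Next I would invoke Cartan's formula: $\Lie_{\Reeb}\eta=\inn(\Reeb)\d\eta+\d(\inn(\Reeb)\eta)=0$, so the Reeb flow preserves $\eta$. Writing any $1$-form in the product chart as $\eta=f\,\d s+\beta_U$ with $\beta_U(\partial_s)=0$, the defining relations $\inn(\Reeb)\eta=1$ and $\Lie_{\Reeb}\eta=0$ force $f=1$ and $\Lie_{\partial_s}\beta_U=0$. Consequently $\beta_U$ is the pullback of an $s$-independent $1$-form $\beta$ on $N$, and $\eta=\d s+\beta$.

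Since $\d\eta=\d\beta$ is purely horizontal, one obtains $\eta\wedge(\d\eta)^{\wedge n}=\d s\wedge(\d\beta)^{\wedge n}$, so the contact condition forces $(\d\beta)^{\wedge n}$ to be a volume form on $N$. Thus $\d\beta$ is a symplectic form on the $2n$-dimensional manifold $N$, and the symplectic Darboux theorem provides coordinates $(q^i,p_i)$ on a neighbourhood of $x_0$ in $N$ with $\d\beta=\d q^i\wedge\d p_i$. The $1$-form $\beta+p_i\,\d q^i$ is therefore closed on a contractible neighbourhood, so by the Poincaré lemma there exists a function $g$ on $N$ with $\beta=-p_i\,\d q^i+\d g$.

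Finally, extending $g$ to $U$ by $s$-independence and setting $\tilde s=s+g$ transforms the chart into one in which $\eta=\d\tilde s-p_i\,\d q^i$, the announced canonical form (note that $\partial_{\tilde s}=\Reeb$ is preserved under this change). The substantive obstacle lies in step three, the symplectic Darboux theorem itself, which is deep but treated here as a standard result; beyond that, the main subtlety is the careful verification in step two that invariance under the Reeb flow together with $\inn(\Reeb)\eta=1$ genuinely force the decomposition $\eta=\d s+\beta$ with $\beta$ pulled back from $N$.
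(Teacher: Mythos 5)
The paper does not prove this theorem at all: it is quoted as a classical result with a citation to Godbillon, so there is no ``paper proof'' to compare against. Your argument is correct and complete modulo the symplectic Darboux theorem, which you are entitled to assume. The key steps all check out: $\Lie_{\Reeb}\eta=\d(\inn(\Reeb)\eta)+\inn(\Reeb)\d\eta=0$ does force, in the flow-box coordinates, $\eta=\d s+\beta$ with $\beta$ pulled back from the transversal $N$; the term $\beta\wedge(\d\beta)^{\wedge n}$ vanishes because it is a $(2n+1)$-form pulled back from the $2n$-dimensional $N$, so the contact condition is exactly the nondegeneracy of $\d\beta$ on $N$; and the final shift $\tilde s=s+g$ is a legitimate coordinate change (the Jacobian is block-triangular with unit $\partial\tilde s/\partial s$), which also keeps $\partial_{\tilde s}=\Reeb$. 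It is worth noting that your strategy --- eliminate the Reeb direction to land on a symplectic manifold, apply symplectic Darboux, then reinstate the distinguished coordinate via a primitive of $\beta+p_i\,\d q^i$ --- is structurally the same as the proof the paper gives for its $k$-contact Darboux theorem (Theorem~\ref{Darboux k-contact}), where the quotient by the Reeb distribution $\mathcal{D}^{\mathrm{R}}$ is shown to be $k$-symplectic and the $k$-symplectic Darboux theorem is invoked; so your proof is the $k=1$ instance of the paper's own later technique. The only presentational caveat is that you should state explicitly that the Darboux chart on $N$ is chosen contractible so the Poincar\'e lemma applies, but that is standard.
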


In Darboux coordinates, the Reeb vector field is
$\displaystyle\Reeb = \frac{\partial}{\partial s}$.

The canonical model for contact manifolds is the manifold 
$\Tan^*Q\times\R$. 
In fact, if 
$\theta\in\Omega^1(\Tan^*Q)$ 
and 
$\omega=-\d\theta\in\Omega^2(\Tan^*Q)$
are the canonical forms in $\Tan^*Q$,
and 
$\pi_1 \colon \Tan^*Q \times \R \to \Tan^*Q$ 
is the canonical projection,
then 
$\eta=\d s-\pi_1^*\theta$ 
is a contact form in $\Tan^*Q\times\R$,
with 
$\d\eta=\pi_1^*\omega$.

Finally, given a contact manifold $(M,\eta)$, 
we have the following $\Cinfty(M)$-module isomorphism
\cite{Al-89}
\begin{equation*}
\begin{array}{rccl}
	\flat\colon & \X(M) & \longrightarrow & \Omega^1(M) 
	\\
	& X & \longmapsto & i(X)\d\eta-(i(X)\eta)\eta
\end{array}
\end{equation*}
and as a straightforward consequence we have:

\begin{thm}
If $(M,\eta)$ is a contact manifold, for every $H\in\Cinfty(M)$, 
there exists a unique vector field $X_H\in\vf(M)$ such that
\begin{equation}\label{Hvf}
    i(X_H)\d\eta=\d H-({\cal R}(H))\eta 
    \,, \quad
    i(X_H)\eta=-H \ .
\end{equation}
\end{thm}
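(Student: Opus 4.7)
The plan is to reduce both conditions in \eqref{Hvf} to a single equation of the form $\flat(X_H)=\alpha$ for an explicit $1$-form $\alpha$, and then invoke the isomorphism $\flat\colon\X(M)\to\Omega^1(M)$ just introduced. Observe that any $X\in\vf(M)$ satisfying both equations automatically satisfies
$$
\flat(X)=i(X)\d\eta-(i(X)\eta)\eta=\bigl[\d H-(\Reeb(H))\eta\bigr]-(-H)\eta=\d H+(H-\Reeb(H))\eta.
$$
Conversely, I expect that solving this single equation forces both original conditions. So my candidate is
$$
X_H:=\flat^{-1}\bigl(\d H+(H-\Reeb(H))\eta\bigr),
$$
whose existence and uniqueness follow from $\flat$ being a $\Cinfty(M)$-module isomorphism.

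To verify that this $X_H$ indeed satisfies \eqref{Hvf}, I would contract the defining identity
$$
i(X_H)\d\eta-(i(X_H)\eta)\eta=\d H+(H-\Reeb(H))\eta \qquad (\ast)
$$
with the Reeb vector field $\Reeb$. Using the defining properties $i(\Reeb)\d\eta=0$ and $i(\Reeb)\eta=1$, together with $i(\Reeb)\d H=\Reeb(H)$, the left-hand side reduces to $-i(X_H)\eta$ and the right-hand side to $\Reeb(H)+(H-\Reeb(H))=H$. This yields the second equation $i(X_H)\eta=-H$. Substituting this back into $(\ast)$ immediately gives the first equation $i(X_H)\d\eta=\d H-(\Reeb(H))\eta$.

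Uniqueness is then automatic: any other solution $X'$ would satisfy $\flat(X')=\d H+(H-\Reeb(H))\eta=\flat(X_H)$, hence $X'=X_H$ by injectivity of $\flat$.

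There is no real obstacle here, since the nontrivial analytic input, namely that $\flat$ is an isomorphism, is granted by the preceding statement; the only mild subtlety is recognizing the right $1$-form to invert, which is obtained by rearranging the two conditions into the single expression $\flat(X_H)=\d H+(H-\Reeb(H))\eta$ and then recovering each condition separately by contracting with $\Reeb$.
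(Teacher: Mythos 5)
Your proof is correct and follows exactly the route the paper intends: the theorem is stated there as a ``straightforward consequence'' of the isomorphism $\flat$, and your argument simply fills in the details by identifying the $1$-form $\d H+(H-\Reeb(H))\eta$ to invert and recovering the two conditions by contracting with $\Reeb$. (Incidentally, your computation shows $\flat(X_H)=i(X_H)\d\eta+H\eta$, which suggests the sign in the paper's subsequent remark $\flat(X_H)=i(X_H)\d\eta-H\eta$ is a typo.)
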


\begin{dfn}
The vector field $X_H$ defined by \eqref{Hvf} is called the
\textbf{contact Hamiltonian vector field} associated with $H$
and equations \eqref{Hvf} are the \textbf{contact Hamilton equations}.
The triple $(M,\eta,H)$ is a \textbf{contact Hamiltonian system}.
\end{dfn}

\begin{obs}{\rm
Notice that the contact Hamiltonian equations are equivalent to 
$$
\Lie_{X_\H}\eta = - (\Lie_{\Reeb}\H)\eta 
\:,\quad
\inn(X_\H)\eta = - \H 
\:.
$$
Furthermore, the contact Hamiltonian vector field is such that
$\flat(X_H)=i(X_H)\d\eta-H\eta$.}
\end{obs}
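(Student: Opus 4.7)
The assertion has two parts: the equivalence of the two systems of equations, and the formula for $\flat(X_\H)$. Both are immediate consequences of Cartan's magic formula $\Lie_X\eta=\inn(X)\d\eta+\d(\inn(X)\eta)$ together with the definition of $\flat$, so the plan is frankly computational rather than strategic; there is no genuine obstacle to overcome, and the main reason to spell it out is that these reformulations are what generalize cleanly to the $k$-contact setting developed in Section \ref{kcontact}.

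For the equivalence, I would first observe that the second equation $\inn(X_\H)\eta=-\H$ appears in both formulations, so it suffices to show that, under this hypothesis, the first equation of \eqref{Hvf} is equivalent to $\Lie_{X_\H}\eta=-(\Lie_\Reeb\H)\eta$. Applying Cartan's formula and using $\inn(X_\H)\eta=-\H$,
$$
\Lie_{X_\H}\eta \;=\; \inn(X_\H)\d\eta + \d(\inn(X_\H)\eta) \;=\; \inn(X_\H)\d\eta - \d\H.
$$
Substituting $\inn(X_\H)\d\eta=\d\H-(\Reeb\H)\eta$ from \eqref{Hvf} into the right-hand side, the two $\d\H$ terms cancel and one obtains $\Lie_{X_\H}\eta=-(\Reeb\H)\eta=-(\Lie_\Reeb\H)\eta$. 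Conversely, assuming $\Lie_{X_\H}\eta=-(\Lie_\Reeb\H)\eta$ and $\inn(X_\H)\eta=-\H$, the same identity solved for $\inn(X_\H)\d\eta$ yields $\inn(X_\H)\d\eta=\Lie_{X_\H}\eta-\d(\inn(X_\H)\eta)=-(\Reeb\H)\eta+\d\H$, which is exactly the first equation of \eqref{Hvf}.

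For the formula for $\flat(X_\H)$, I would simply substitute $\inn(X_\H)\eta=-\H$ into the definition $\flat(X)=\inn(X)\d\eta-(\inn(X)\eta)\eta$, so that the term $(\inn(X_\H)\eta)\eta$ reduces to a scalar multiple of $\eta$ determined by $\H$, giving the stated identity at once. Thus both parts of the remark are one-line verifications, and the only thing to check carefully is the sign bookkeeping in the application of Cartan's formula.
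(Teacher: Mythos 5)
Your verification of the equivalence is correct and is surely the intended argument: since $\inn(X_\H)\eta=-\H$ is common to both systems, Cartan's formula $\Lie_{X_\H}\eta=\inn(X_\H)\d\eta+\d\bigl(\inn(X_\H)\eta\bigr)$ turns the first equation of \eqref{Hvf} into $\Lie_{X_\H}\eta=-(\Lie_{\Reeb}\H)\eta$ and, solved the other way for $\inn(X_\H)\d\eta$, gives the converse. No issue there.

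The second part, however, contains exactly the sign slip you warned yourself about, and it is a genuine gap rather than a cosmetic one. Substituting $\inn(X_\H)\eta=-\H$ into the definition you quote (which is the one the paper writes), $\flat(X)=\inn(X)\d\eta-(\inn(X)\eta)\eta$, gives
$$
\flat(X_\H)=\inn(X_\H)\d\eta-(-\H)\,\eta=\inn(X_\H)\d\eta+\H\,\eta ,
$$
which is \emph{not} the stated identity $\flat(X_\H)=\inn(X_\H)\d\eta-\H\,\eta$; the $\H\eta$ terms have opposite signs. So the claim that the substitution yields the stated identity ``at once'' is false as written. What you have actually uncovered is an internal sign inconsistency in the paper: the isomorphism of Albert's reference (and of the contact Hamiltonian systems literature generally) is $\flat(X)=\inn(X)\d\eta+(\inn(X)\eta)\,\eta$, so that $\flat(\Reeb)=\eta$, and with that plus-sign convention your one-line substitution does produce $\flat(X_\H)=\inn(X_\H)\d\eta-\H\,\eta$, equivalently $\flat(X_\H)=\d\H-\bigl(\Lie_{\Reeb}\H+\H\bigr)\eta$, as the remark asserts. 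A complete proof must therefore either adopt the plus-sign definition of $\flat$ and then run your computation, or else note that with the minus-sign definition the remark's formula should read $\inn(X_\H)\d\eta+\H\eta$. As it stands, your argument asserts a conclusion that does not follow from the premises you quote, and the discrepancy is precisely the kind of sign bookkeeping the verification was supposed to check.
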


Taking Darboux coordinates $(q^i, p_i, s)$, the contact Hamiltonian vector field is
$$ X = \frac{\partial\H}{\partial p_i}\frac{\partial}{\partial q^i} - \left(\frac{\partial\H}{\partial q^i} + p_i\frac{\partial\H}{\partial s}\right)\frac{\partial}{\partial p_i} + \left(p_i\frac{\partial\H}{\partial p_i} - \H\right)\frac{\partial}{\partial s}\ . $$
Hence, an integral curve of this vector field satisfies the contact Hamilton equations:
$$
\dot q^i = \frac{\partial\H}{\partial p_i}
\:,\quad
\dot p_i = -\left(\frac{\partial\H}{\partial q^i} + p_i\frac{\partial\H}{\partial s}\right)
\:,\quad
\dot s = p_i\frac{\partial\H}{\partial p_i} - \H
\:.
$$

\subsection{k-vector fields and integral sections}

The definition of $k$-vector field and 
its usage in the geometric study of partial differential equations
can be found in
\cite{DeLeon2015,Rey2004},
for instance.

Let $M$ be a manifold, and consider the direct sum of $k$ copies of its tangent bundle,
$\oplus^k \Tan M$.
It is endowed with the natural projections to each direct summand and to the base manifold:
$$
\tau^\alpha\colon \oplus^k \Tan M \rightarrow \Tan M
\,,\quad 
\tau^1_M  \colon \oplus^k \Tan M \to M 
\,.
$$

\begin{dfn}
\label{kvector}
A \textbf{$k$-vector field} on a manifold $M$ is a section
${\bf X} \colon M \to \oplus^k \Tan M$ 
of the projection~$\tau_M^1$.
\end{dfn}

A $k$-vector field ${\bf X}$ is specified by giving 
$k$ vector fields $X_{1}, \dots, X_{k}\in\vf(M)$, obtained as
$X_\alpha=\tau_\alpha\circ{\bf X}$.
Then it is denoted ${\bf X}=(X_1, \ldots, X_k)$.
A $k$-vector field ${\bf X}=(X_1, \ldots, X_k)$ 
induces a decomposable 
contravariant skew-symmetric tensor field,
$X_1 \wedge \ldots \wedge X_k$,
which is a section of $\Lambda^k \Tan M  \to M$.
This also induces a tangent distribution on~$M$.

\begin{dfn}
Given a map $\phi\colon D\subset\R^k \rightarrow M$,
the \textbf{first prolongation} 
of $\phi$ to $\oplus^k \Tan M$
is the map 
$\phi' \colon D \subset \R^k \to \oplus^k \Tan M$ 
defined by
$$
\phi'(t) =
\left(\phi(t),\Tan \phi \left(\frac{\partial}{\partial t^1}\Big\vert_{t}\right),
\ldots,
\Tan\phi\left(\frac{\partial}{\partial t^k}\Big\vert_{t}\right)\right)
\equiv (\phi(t);\phi_\alpha'(t))
\,,
$$
where $t = (t^1,\ldots,t^k)$ are the cartesian coordinates of $\R^k$.
\end{dfn}

\begin{dfn}
\label{integsect}
An \textbf{integral section} of a $k$-vector field
${\bf X}=(X_{1},\dots, X_{k})$ is a map
$\phi \colon D\subset\R^k \rightarrow M$ 
such that
$$
\phi' = {\bf X} \circ \phi
\,.
$$
Equivalently,
$\ds
\Tan \phi \circ \frac{\partial}{\partial t^\alpha}=X_\alpha \circ \phi
$
for every~$\alpha$.

A $k$-vector field ${\bf X}$ is \textbf{integrable} if
every point of~$M$ is in the image of an integral section of~${\bf X}$.
\end{dfn}
In coordinates, if
$\displaystyle X_\alpha= X_\alpha^i \frac{\partial}{\partial x^i}$,
then $\phi$ is an integral section of $\mathbf{X}$ if, and only if,
it is a solution to the following system of partial differential equations:
$$
\frac{\partial \phi^i}{\partial t^\alpha} = X_\alpha^i(\phi) \ .
$$
A $k$-vector field ${\bf X}=(X_1, \ldots, X_k)$ is integrable
if, and only if, $[X_\alpha,X_\beta] = 0$, for every $\alpha,\beta$ 
(see, for instance, \cite{Lee2013}),
and these are the necessary and sufficient conditions 
for the integrability of the above system of partial differential equations.

\subsection{\texorpdfstring{$k$}--symplectic manifolds and
\texorpdfstring{$k$}--symplectic Hamiltonian systems}

A simple geometric framework for Hamiltonian field theory 
can be built upon the notion of $k$-symplectic structure;
see for instance
\cite{Awane1992,DeLeon1988,DeLeon1988a,DeLeon2015,Rey2004}.

\begin{dfn}
\label{defaw}
Let $M$ be a manifold of dimension $N=n+kn$.
A \textbf{$k$-symplectic structure} on~$M$
is a family $(\omega^1,\ldots,\omega^k;V)$,
where $\omega^\alpha$ ($\alpha=1,\ldots,k$) are closed $2$-forms,
and $V$ is an integrable $nk$-dimensional tangent distribution on~$M$
such that
$$
(i) \ \omega^\alpha \vert_{V\times V} =0 \ \hbox{\rm (for every $\alpha$)}
\:, \quad
(ii)\ \cap_{\alpha=1}^{k} \ker\omega^\alpha = \{0\} 
\:.
$$
Then $(M,\omega^\alpha,V)$ is called a \textbf{$k$-symplectic manifold}.
\end{dfn}

If $(M,\omega^\alpha,V)$ is a $k$-symplectic manifold,
for every point of $M$ there exist a neighborhood $U$
and local coordinates
$(q^i , p^\alpha_i)$ ($1\leq i\leq n$, $1\leq \alpha\leq k$)
such that, on~$U$,
$$
\omega^\alpha=  \d q^i\wedge \d p^\alpha_i 
\:,\quad
V =
\left\langle \frac{\partial}{\partial p^1_i}, \dots,
\frac{\partial}{\partial p^k_i} \right\rangle 
\,.
$$
These are the so-called \emph{Darboux} or \emph{canonical coordinates}
of the $k$-symplectic manifold \cite{Awane1992}.

The canonical model for $k$-symplectic manifolds is
$\oplus^k \Tan^*Q= \Tan^*Q\oplus \stackrel{k}{\dots} \oplus \Tan^*Q$,
with natural projections
$$
\pi^\alpha\colon \oplus^k \Tan^*Q \rightarrow\Tan^*Q
\:, \quad 
\pi^1_Q \colon \oplus^k \Tan^*Q \to Q 
\:.
$$
As in the case of the cotangent bundle,
local coordinates $(q^i)$ in $U \subset Q$ induce
induced natural coordinates $(q^i ,p^\alpha_i)$ in
$(\pi^1_Q)^{-1}(U)$.

If $\theta$ and $\omega=-\d\theta$ are the canonical forms of $\Tan^*Q$,
then $\oplus^k \Tan^*Q$ is endowed with the canonical forms
$$
\theta^\alpha=(\pi^\alpha)^*\theta 
\,,\quad
\omega^\alpha=(\pi^\alpha)^*\omega=-(\pi^\alpha)^*\d\theta=-\d\theta^\alpha 
,
$$
and in natural coordinates we have that 
$\theta^\alpha = p^\alpha_i\d q^i$ 
and
$\omega^\alpha=\d q^i\wedge\d p^\alpha_i$.
Thus, the triple  
$(\oplus^k \Tan^*Q,\omega^\alpha, V)$,
where $V=\ker \Tan \pi^1_Q$,
is a $k$-symplectic manifold,
and the natural coordinates in $\oplus^k \Tan^*Q$ are Darboux coordinates.

\begin{dfn}
A \textbf{$k$-symplectic Hamiltonian system} is a family $(M,\omega^\alpha,V,H)$,
where $(M,\omega^\alpha,V)$ is a $k$-symplectic manifold,
and $H\in \Cinfty(M)$ is called a {\sl Hamiltonian function}.
\\
The \textbf{Hamilton--De Donder--Weyl equation} for a map
$\psi \colon D\subset\R^k \to M$ is
\begin{equation}
\inn({\psi'_\alpha}) \omega^\alpha = \d H  \circ \psi
\,.
\label{he20}
\end{equation}
The \textbf{Hamilton--De Donder--Weyl equation} for a $k$-vector field
${\bf X}=(X_1,\dots,X_k)$ in $M$ is
\begin{equation}\label{generic0}
\inn(X_\alpha)\,\omega^\alpha = \d H 
\,.
\end{equation}
\end{dfn}

For $k$-symplectic Hamiltonian systems,
solutions to this equation always exist, 
although they are not unique.
Moreover, solutions are nor necessarily integrable.

In canonical coordinates, if 
$\psi=(\psi^i,\psi^\alpha_i)$,
then 
$\ds 
\psi'_\beta  =
\Big(
\psi^i,\psi^\alpha_i,\derpar{\psi^i}{t^\beta},\derpar{\psi^\alpha_i}{t^\beta}
\Big)$,
and equation \eqref{he20} reads
$$
\frac{\partial\psi^i}{\partial t^\alpha} =
\frac{\partial H}{\partial p^\alpha_i}\circ\psi 
\:,\quad
\frac{\partial\psi^\alpha_i}{\partial t^\alpha} =
- \frac{\partial H}{\partial q^i}\circ\psi 
\:.
$$
Furthermore,
if ${\bf X}=(X_\alpha)$ is a $k$-vector field solution to \eqref{generic0} and
$\displaystyle
X_\alpha= (X_\alpha)^i\frac{\partial}{\partial q^i}+
(X_\alpha)_i^\beta\frac{\partial}{\partial p_i^\beta}
$, 
then
$$
\frac{\partial H}{\partial q^i} =  -(X_\alpha)^\alpha_i 
\:,\quad
\frac{\partial H}{\partial p^\alpha_i} = (X_\alpha)^i 
\:.
$$

\begin{prop}
{\rm \cite{DeLeon2015,Rey2004}}.
If ${\bf X}$ is an integrable $k$-vector field on~$M$ then
every integral section $\psi \colon D\subset\R^k \to M$ of ${\bf X}$
satisfies equation (\ref{he20})
if, and only if,
${\bf X}$ is a solution to equation \eqref{generic0}.
\end{prop}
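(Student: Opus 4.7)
The plan is to prove the two implications directly from the definition of integral section, with the integrability hypothesis used only to ensure that integral sections pass through every point of~$M$.

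First I would handle the easy direction: suppose $\mathbf{X}=(X_1,\ldots,X_k)$ satisfies the $k$-symplectic Hamilton equation \eqref{generic0}, i.e.\ $\inn(X_\alpha)\omega^\alpha = \d H$ as an equality of $1$-forms on $M$. Let $\psi\colon D\subset\R^k\to M$ be any integral section of $\mathbf{X}$. By Definition~\ref{integsect}, at each $t\in D$ one has $\psi'_\alpha(t) = X_\alpha(\psi(t))$. Evaluating \eqref{generic0} at the point $\psi(t)$ and substituting $X_\alpha(\psi(t))$ by $\psi'_\alpha(t)$ in the contraction gives $\inn(\psi'_\alpha(t))\omega^\alpha_{\psi(t)} = (\d H)_{\psi(t)}$, which is exactly \eqref{he20}. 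Note that this implication does not require integrability.

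For the converse, assume every integral section of $\mathbf{X}$ satisfies \eqref{he20}. I need to recover the pointwise identity $\inn(X_\alpha)_p\omega^\alpha_p = (\d H)_p$ at every $p\in M$. Here is where the integrability hypothesis enters: because $\mathbf{X}$ is integrable, Definition~\ref{integsect} guarantees that every point $p\in M$ lies in the image of some integral section $\psi$, i.e.\ $p = \psi(t_0)$ for some $t_0\in D$. Applying \eqref{he20} at $t_0$ and using $\psi'_\alpha(t_0) = X_\alpha(p)$ yields $\inn(X_\alpha(p))\omega^\alpha_p = (\d H)_p$. Since $p$ was arbitrary, \eqref{generic0} holds globally.

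The proof is essentially a bookkeeping argument, and no step is genuinely hard. The only subtlety worth flagging is why integrability is needed at all: without it, even if \eqref{he20} holds along every existing integral section, those sections might cover only a proper subset of $M$, and we could not conclude the pointwise identity \eqref{generic0} elsewhere. Integrability of $\mathbf{X}$, equivalently the commutation relations $[X_\alpha,X_\beta]=0$ mentioned just before the proposition, is exactly the condition that fills in this gap by ensuring existence of an integral section through every point.
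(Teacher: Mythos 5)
Your proof is correct and follows exactly the argument the paper has in mind: the paper cites this result without proof, but its proof of the analogous $k$-contact statement (Proposition~\ref{prop1}) is precisely your two-line bookkeeping, namely substituting $\psi'_\alpha = X_\alpha\circ\psi$ for one direction and using integrability to cover every point of $M$ by an integral section for the other. Your remark on where integrability is genuinely needed is accurate and consistent with the paper's Remark~\ref{remark1}.
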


Equations (\ref{he20}) and (\ref{generic0})
are not, in general, fully equivalent:
a solution to equation (\ref{he20})
may not be
an integral section of some integrable $k$-vector field in~$M$
solution to \eqref{generic0}.

\section{\texorpdfstring{$k$}--contact structures}
\label{kcontact}

Next we develop the general geometric framework
of our formalism.

\subsection{Definitions and basic properties}
\label{subsect-contact}

Let $M$ be a smooth manifold of dimension~$m$.
A (generalized) distribution on~$M$ 
is a subset $D \subset \Tan M$
such that, for every $x \in M$,
$D_x \subset \Tan_xM$ is a vector subspace.
$D$ is called smooth when it can be locally spanned
by a family of smooth vector fields;
it is called regular when it is smooth
and of locally constant rank.
One defines in the same way the notion of
codistribution, as a subset $C \subset \Tan^*M$.
The annihilator $D^\circ$ of a distribution~$D$ 
is a codistribution, 
but if $D$ is not regular then 
$D^\circ$ may not be smooth.
Within the usual identification $E^{**} = E$
of finite-dimensional linear algebra,
we have $(D^{\circ})^{\circ} = D$.

A (smooth) differential 1-form $\eta \in \Omega^1(M)$
generates a smooth codistribution that we denote by
$\langle \eta \rangle \subset \Tan^*M$;
it has rank~1 at every point where $\eta$ does not vanish.
Its annihilator is a distribution 
$\langle \eta \rangle^\circ \subset \Tan M$;
it can be described also as the kernel
of the linear morphism
$\widehat \eta \colon \Tan M \to M \times \R$
defined by~$\eta$.
This distribution has corank~1 
at every point where $\eta$ does not vanish.

In a similar way,
a differential 2-form $\omega \in \Omega^2(M)$
induces a linear morphism
$\widehat \omega \colon \Tan M \to \Tan^*M$,
$\widehat\omega(v) = \inn(v)\omega$.
Its kernel is a distribution
$\ker \widehat\omega \subset \Tan M$.
Recall that the rank of $\widehat \omega$ is an even number.

\medskip
Now we consider $k$ differential 1-forms
$\eta^1, \ldots, \eta^k \in \Omega^1(M)$,
and introduce the following notations:
\begin{itemize}
\item 
$\mathcal{C}^{\mathrm{C}} =
\langle \eta^1, \ldots, \eta^k \rangle \subset
\Tan^*M$;
\item 
$\mathcal{D}^{\mathrm{C}} =
\left( \mathcal{C}^{\mathrm{C}} \right)^\circ =
\ker \widehat{\eta^1} \cap \ldots \cap \ker \widehat{\eta^k} \subset
\Tan M$;
\item 
$\mathcal{D}^{\mathrm{R}} =
\ker \widehat{\d \eta^1} \cap \ldots \cap \ker \widehat{\d \eta^k} \subset
\Tan M$;
\item
$\mathcal{C}^{\mathrm{R}} = 
\left( \mathcal{D}^{\mathrm{R}} \right)^\circ\subset
\Tan^*M$.
\end{itemize}

\begin{dfn}
\label{kconman}
A \textbf{$k$-contact structure} on a manifold~$M$
is a family of $k$ differential 1-forms 
$\eta^\alpha \in \Omega^1(M)$
such that,
with the preceding notations,
\begin{enumerate}[(i)]
\item 
$\mathcal{D}^{\mathrm{C}} \subset \Tan M$
is a regular distribution of corank~$k$;
\item 
$\mathcal{D}^{\mathrm{R}} \subset \Tan M$
is a regular distribution of rank~$k$;
\item
$\mathcal{D}^{\mathrm{C}} \cap \mathcal{D}^{\mathrm{R}} = \{0\}$.
\end{enumerate}
We call
$\mathcal{C}^{\mathrm{C}}$
the \textbf{contact codistribution};
$\mathcal{D}^{\mathrm{C}}$ 
the \textbf{contact distribution};
$\mathcal{D}^{\mathrm{R}}$ 
the \textbf{Reeb distribution};
and
$\mathcal{C}^{\mathrm{R}}$
the \textbf{Reeb codistribution}.
\\
A \textbf{$k$-contact manifold} is a manifold endowed with a $k$-contact structure.
\end{dfn}

\begin{obs}\rm
Condition (i) in Definition \ref{kconman} is equivalent to each one of these two conditions:
\begin{enumerate}
\item[\it (i\,$'$)]
$\mathcal{C}^{\mathrm{C}} \subset \Tan^*M$
is a regular codistribution of rank~$k$;
\item[\it (i\,$''$)]
$\eta^1 \wedge \ldots \wedge \eta^k \neq 0$
at every point.
\end{enumerate}
Condition (iii) can be obviously rewritten as
\begin{enumerate}
\item[\it (iii\,$'$)]
$\displaystyle
\bigcap_{\alpha=1}^{k} \left(
\ker \widehat{\eta^\alpha} \cap \ker \widehat{\d \eta^\alpha}
\right) =
\{0\}
$.
\end{enumerate}
Provided that 
conditions (i) and (ii) in Definition \ref{kconman} hold,
condition (iii) is also equivalent to each one of these two conditions:
\begin{enumerate}
\item[\it (iii\,$''$)]
$\Tan M = \mathcal{D}^{\mathrm{C}} \oplus \mathcal{D}^{\mathrm{R}}$.
\item[\it (iii\,$'''$)]
$\Tan^*M = \mathcal{C}^{\mathrm{C}} \oplus \mathcal{C}^{\mathrm{R}}$.
\end{enumerate}
\end{obs}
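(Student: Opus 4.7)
The plan is to verify each equivalence by pointwise linear algebra, since all the objects are regular distributions/codistributions. I work pointwise at an arbitrary $x \in M$ and use the standard fact that in a finite-dimensional vector space $V$, for any subspace $W\subset V$ one has $\dim W + \dim W^\circ = \dim V$ and $(W^\circ)^\circ = W$.

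For the equivalence (i)\,$\Leftrightarrow$\,(i\,$'$)\,$\Leftrightarrow$\,(i\,$''$): By construction $\mathcal{D}^{\mathrm{C}}_x = (\mathcal{C}^{\mathrm{C}}_x)^\circ$, so $\mathop{\mathrm{corank}} \mathcal{D}^{\mathrm{C}}_x = \mathop{\mathrm{rank}} \mathcal{C}^{\mathrm{C}}_x$; since $\mathcal{C}^{\mathrm{C}}_x$ is spanned by at most $k$ covectors, its rank is $k$ exactly when $\eta^1_x,\ldots,\eta^k_x$ are linearly independent, which gives (i)\,$\Leftrightarrow$\,(i\,$'$). Smoothness of the distribution amounts to locally constant rank, which is automatic once pointwise rank $k$ is established. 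The equivalence (i\,$'$)\,$\Leftrightarrow$\,(i\,$''$) is the standard criterion that $1$-forms are linearly independent at a point iff their exterior product does not vanish there.

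For (iii)\,$\Leftrightarrow$\,(iii\,$'$): just distribute intersections, using $\mathcal{D}^{\mathrm{C}} = \bigcap_\alpha \ker\widehat{\eta^\alpha}$ and $\mathcal{D}^{\mathrm{R}} = \bigcap_\alpha \ker\widehat{\d\eta^\alpha}$, so
\[
\mathcal{D}^{\mathrm{C}} \cap \mathcal{D}^{\mathrm{R}}
= \bigcap_{\alpha=1}^{k}\bigl(\ker\widehat{\eta^\alpha}\cap\ker\widehat{\d\eta^\alpha}\bigr).
\]

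For the last two equivalences, assume (i) and (ii), so $\dim \mathcal{D}^{\mathrm{C}}_x = m - k$ and $\dim \mathcal{D}^{\mathrm{R}}_x = k$, with $m = \dim M$. The Grassmann formula gives
\[
\dim(\mathcal{D}^{\mathrm{C}}_x + \mathcal{D}^{\mathrm{R}}_x) = m - \dim(\mathcal{D}^{\mathrm{C}}_x \cap \mathcal{D}^{\mathrm{R}}_x),
\]
so (iii) is equivalent to $\mathcal{D}^{\mathrm{C}}_x + \mathcal{D}^{\mathrm{R}}_x = \Tan_x M$, which combined with trivial intersection yields (iii\,$''$); this shows (iii)\,$\Leftrightarrow$\,(iii\,$''$). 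Passing to annihilators and using $(\mathcal{D}^{\mathrm{C}})^\circ = \mathcal{C}^{\mathrm{C}}$ (by double annihilator) and $(\mathcal{D}^{\mathrm{R}})^\circ = \mathcal{C}^{\mathrm{R}}$ (by definition), together with the identities $(A+B)^\circ = A^\circ\cap B^\circ$ and $(A\cap B)^\circ = A^\circ + B^\circ$, gives (iii\,$''$)\,$\Leftrightarrow$\,(iii\,$'''$). Indeed, $\Tan M = \mathcal{D}^{\mathrm{C}}\oplus\mathcal{D}^{\mathrm{R}}$ dualizes to $\{0\} = \mathcal{C}^{\mathrm{C}}\cap\mathcal{C}^{\mathrm{R}}$ and $\Tan^*M = \mathcal{C}^{\mathrm{C}} + \mathcal{C}^{\mathrm{R}}$ by a dimension count analogous to the one above.

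No step is particularly hard; the only mild subtlety is making sure that the codistribution $\mathcal{C}^{\mathrm{R}}$ is regular, which follows from (ii) via the rank-annihilator relation, and that the double-annihilator identifications used in the last paragraph rely on $\mathcal{C}^{\mathrm{C}}$ being regular as well, guaranteed by (i\,$'$). With these in hand the equivalences are essentially pointwise linear algebra and incur no integrability or smoothness complications.
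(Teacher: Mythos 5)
Your proof is correct, and it fills in exactly the pointwise linear-algebra details (annihilator rank relations, the wedge-product independence criterion, the Grassmann formula, and dualization of the splitting) that the paper leaves implicit by stating this as a remark without proof. No discrepancy with the paper's intended argument.
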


\begin{obs}\rm
For the case $k=1$, 
a 1-contact structure is provided by a differential 1-form $\eta$,
and conditions in Definition \ref{kconman} 
mean the following:
(i) 
$\eta \neq 0$ at every point;
(iii) 
$\ker \widehat{\eta} \cap \ker \widehat{\d \eta} = \{0\}$,
which implies that 
$\ker \widehat{\d \eta}$ 
has rank 0 or~1;
(ii) 
means that
$\ker \widehat{\d \eta}$ 
has rank~1.
So, provided that (i) and (iii) hold,
condition (ii) is equivalent to saying that $\dim M$ is odd.
In this way, 
we recover the definition of contact structure.
\end{obs}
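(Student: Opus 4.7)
The plan is to unwind each of the three conditions of Definition \ref{kconman} in the case $k=1$, using only linear algebra on the bundles $\Tan M$ and $\Tan^*M$, and then to connect the result to the classical definition via the decomposition stated in the earlier remark after Definition \ref{dfn-contact-manifold}.

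First I would identify the four objects associated to the single 1-form $\eta$: $\mathcal{C}^{\mathrm{C}} = \langle \eta \rangle$, $\mathcal{D}^{\mathrm{C}} = \ker \widehat{\eta}$, $\mathcal{D}^{\mathrm{R}} = \ker \widehat{\d\eta}$, and $\mathcal{C}^{\mathrm{R}} = (\ker \widehat{\d\eta})^\circ$. For condition (i), observing that $\widehat\eta$ is a linear morphism $\Tan M \to M \times \R$, its kernel has corank $0$ or $1$ at each point, and it has corank $1$ precisely where $\eta \neq 0$. Requiring $\mathcal{D}^{\mathrm{C}}$ to be regular of corank $1$ is then equivalent to $\eta$ being nowhere-zero, which is condition $(i)$ of the remark.

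Next, for condition (iii), by definition it is just $\ker\widehat\eta\cap\ker\widehat{\d\eta}=\{0\}$. Since $\ker\widehat\eta$ has corank~$1$ by (i), the intersection condition forces $\ker\widehat{\d\eta}$ to have rank at most $1$ at each point, so $0$ or $1$. The crucial input for handling condition (ii) is the standard fact that the rank of a skew-symmetric bilinear form on a finite-dimensional vector space is always even; applied pointwise to $\d\eta$, this means that $\dim \ker \widehat{\d\eta}_x$ has the same parity as $\dim M$. Therefore, once (i) and (iii) are assumed, $\ker\widehat{\d\eta}$ has rank exactly $1$ (which is condition (ii)) if and only if $\dim M$ is odd; otherwise the rank is forced to be $0$.

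Finally, to recover the classical contact definition, I would combine the three conditions into the pointwise decomposition $\Tan M = \mathcal{D}^{\mathrm{C}} \oplus \mathcal{D}^{\mathrm{R}} = \ker\widehat\eta \oplus \ker\widehat{\d\eta}$ (which matches condition (iii$''$) of the preceding remark, noting that the dimensions $m-1$ and $1$ add up to $\dim M$). By the equivalence stated in the remark after Definition \ref{dfn-contact-manifold}, this decomposition is in turn equivalent to $\eta \wedge (\d\eta)^{\wedge n}$ being a volume form on a $(2n+1)$-dimensional manifold, which is exactly Definition \ref{dfn-contact-manifold}. The main (very mild) obstacle is the parity argument for condition (ii); everything else is bookkeeping of ranks and coranks.
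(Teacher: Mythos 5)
Your proposal is correct and follows essentially the same route the paper intends: the remark is stated without proof, and your argument fills in the details using exactly the ingredients the paper has already set up (the corank computation for $\widehat\eta$, the evenness of the rank of the skew-symmetric form $\d\eta$ recalled at the start of Section 3.1, and the equivalence $\Tan M = \ker\widehat\eta \oplus \ker\widehat{\d\eta}$ from the remark after Definition \ref{dfn-contact-manifold}). Nothing is missing; the parity argument is the only nontrivial step and you handle it correctly.
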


\begin{lem}
The Reeb distribution 
$\mathcal{D}^{\mathrm{R}}$ 
is involutive,
and therefore integrable.
\end{lem}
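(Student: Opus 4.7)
The plan is to prove involutivity directly from the characterization $\mathcal{D}^{\mathrm{R}} = \bigcap_{\alpha=1}^{k} \ker \widehat{\d \eta^\alpha}$ and then appeal to Frobenius once involutivity is established. Since condition (ii) of Definition \ref{kconman} guarantees that $\mathcal{D}^{\mathrm{R}}$ is a regular distribution of rank~$k$, integrability will follow from involutivity by the classical Frobenius theorem, so the only content is involutivity.

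First I would take two arbitrary (local) vector fields $X, Y \in \vf(M)$ that are sections of $\mathcal{D}^{\mathrm{R}}$, i.e.\ such that $\inn(X)\d\eta^\alpha = 0$ and $\inn(Y)\d\eta^\alpha = 0$ for every $\alpha = 1, \ldots, k$. The goal is to show that $\inn([X,Y])\d\eta^\alpha = 0$ for every~$\alpha$, which places $[X,Y]$ in $\mathcal{D}^{\mathrm{R}}$.

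The key step uses two standard identities of Cartan calculus. Since each $\d\eta^\alpha$ is a closed $2$-form, Cartan's magic formula gives
$$
\Lie_X \d\eta^\alpha = \inn(X)\d(\d\eta^\alpha) + \d(\inn(X)\d\eta^\alpha) = 0 + \d(0) = 0.
$$
Then the commutator identity $\inn([X,Y]) = \Lie_X \circ \inn(Y) - \inn(Y) \circ \Lie_X$ applied to $\d\eta^\alpha$ yields
$$
\inn([X,Y])\d\eta^\alpha = \Lie_X\bigl(\inn(Y)\d\eta^\alpha\bigr) - \inn(Y)\bigl(\Lie_X \d\eta^\alpha\bigr) = \Lie_X(0) - \inn(Y)(0) = 0,
$$
for every $\alpha$. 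Hence $[X,Y]$ annihilates all the $\d\eta^\alpha$ and therefore lies in $\mathcal{D}^{\mathrm{R}}$, proving involutivity.

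I do not expect any real obstacle here: both identities used are standard, and closedness of $\d\eta^\alpha$ is automatic. The only subtle point is ensuring that $\mathcal{D}^{\mathrm{R}}$ is regular so that Frobenius applies, but this is precisely hypothesis (ii) of Definition \ref{kconman}. Note that, by contrast, the analogous statement for the contact distribution $\mathcal{D}^{\mathrm{C}}$ would fail, because $\eta^\alpha$ need not be closed and an extra term $\d\eta^\alpha(X,Y)$ would appear.
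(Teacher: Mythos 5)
Your proof is correct and follows essentially the same route as the paper: both use the identity $\inn([X,Y]) = \Lie_X \circ \inn(Y) - \inn(Y) \circ \Lie_X$ together with Cartan's formula and the closedness of $\d\eta^\alpha$ to show that $[X,Y]$ annihilates every $\d\eta^\alpha$. Your write-up is slightly more explicit (separating the step $\Lie_X\d\eta^\alpha=0$ and noting where regularity enters for Frobenius), but there is no substantive difference.
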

\begin{proof}
We use the relation
$$
\inn([X,X']) = \Lie_{X}\inn(X') -\inn(X') \Lie_{X} =
\d\inn(X)\inn(X')
+\inn(X) \d\inn(X')
-\inn(X')\d\inn(X)
-\inn(X')\inn(X) \d
\,. 
$$
When $X,X'$ are sections of $\mathcal{D}^{\mathrm{R}}$
and we apply this relation to the closed 2-form $\d \eta^\alpha$
the result is zero.
\end{proof}

\begin{thm}[Reeb vector fields]
\label{reebvf}
On a $k$-contact manifold $(M,\eta^\alpha)$ 
there exist $k$ vector fields 
$\Reeb_\alpha \in \mathfrak{X}(M)$,
the \emph{Reeb vector fields},
uniquely defined by the relations
\beq
\label{reebcontact}
\inn({\Reeb_\beta}) \eta^\alpha = \delta^\alpha_{\,\beta}
\,,\quad 
\inn({\Reeb_\beta}) \d\eta^\alpha = 0
\,.
\eeq
The Reeb vector fields commute:
$$
[\Reeb_\alpha,\Reeb_\beta] = 0
\,.
$$
In particular, 
${\cal D}^{\rm R} = 
\langle \Reeb_1,\ldots,\Reeb_k \rangle$.
\end{thm}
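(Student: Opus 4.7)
The plan is to first obtain the Reeb vector fields by pointwise linear algebra on the Reeb distribution, using the splitting of $\Tan M$ established by the axioms, and then to derive commutativity from Cartan's formula together with involutivity of ${\cal D}^{\rm R}$.

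First I would observe that the second equation in \eqref{reebcontact} requires $\Reeb_\beta$ to be a section of ${\cal D}^{\rm R}$. Thus the problem reduces to producing a basis of ${\cal D}^{\rm R}$ dual to the forms $\eta^\alpha$. By axioms (i)--(iii), as noted in the remark following Definition \ref{kconman}, one has the decomposition $\Tan M = {\cal D}^{\rm C} \oplus {\cal D}^{\rm R}$, with ${\cal D}^{\rm R}$ of rank $k$. Consider the bundle morphism
\begin{equation*}
\Phi \colon {\cal D}^{\rm R} \longrightarrow M \times \R^k,
\qquad
v \longmapsto \bigl( \eta^1(v), \ldots, \eta^k(v) \bigr).
\end{equation*}
If $v \in {\cal D}^{\rm R}$ lies in $\ker\Phi$, then $v \in {\cal D}^{\rm C} \cap {\cal D}^{\rm R} = \{0\}$, so $\Phi$ is fibrewise injective between bundles of the same rank $k$, hence a (smooth) isomorphism. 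The Reeb vector fields $\Reeb_\beta$ are then the smooth sections of ${\cal D}^{\rm R}$ sent by $\Phi$ to the canonical basis of $\R^k$; they are unique by construction, and they satisfy \eqref{reebcontact} by definition.

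To prove commutativity, note that by the previous lemma ${\cal D}^{\rm R}$ is involutive, so $[\Reeb_\alpha,\Reeb_\beta]$ is a section of ${\cal D}^{\rm R}$. By the characterisation above, it suffices to show that $\inn([\Reeb_\alpha,\Reeb_\beta])\eta^\gamma = 0$ for every~$\gamma$. Using the identity $\inn([X,Y]) = \Lie_X \inn(Y) - \inn(Y)\Lie_X$ and Cartan's magic formula, I compute
\begin{equation*}
\inn([\Reeb_\alpha,\Reeb_\beta])\eta^\gamma
= \Lie_{\Reeb_\alpha}(\inn(\Reeb_\beta)\eta^\gamma)
- \inn(\Reeb_\beta)\Lie_{\Reeb_\alpha}\eta^\gamma
= \Lie_{\Reeb_\alpha}(\delta^\gamma_\beta)
- \inn(\Reeb_\beta)\bigl(\inn(\Reeb_\alpha)\d\eta^\gamma + \d\delta^\gamma_\alpha\bigr)
= 0,
\end{equation*}
where both defining relations of the Reeb vector fields have been used. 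Since $[\Reeb_\alpha,\Reeb_\beta]$ then lies in ${\cal D}^{\rm R}$ and is annihilated by every $\eta^\gamma$, injectivity of $\Phi$ forces $[\Reeb_\alpha,\Reeb_\beta] = 0$. The last assertion ${\cal D}^{\rm R} = \langle \Reeb_1,\ldots,\Reeb_k \rangle$ is immediate, since the $\Reeb_\alpha$ are $k$ pointwise linearly independent sections (they are mapped to a basis by $\Phi$) of a rank-$k$ distribution.

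The construction is largely linear-algebraic once the axioms are unpacked; the only delicate point is ensuring that fibrewise inversion of $\Phi$ yields smooth sections, which is automatic because ${\cal D}^{\rm R}$ is a regular distribution and $\Phi$ is a smooth bundle isomorphism. The commutativity argument is where one must be careful to avoid a circular use of involutivity: I invoke the lemma to place the bracket in ${\cal D}^{\rm R}$ and then kill its $\eta^\gamma$-components separately.
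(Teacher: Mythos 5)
Your proof is correct, and it reaches the same endpoint as the paper's by a dual route. The paper works on the cotangent side: it uses the splitting $\Tan^*M = \mathcal{C}^{\mathrm{C}} \oplus \mathcal{C}^{\mathrm{R}}$, completes the global coframe $\eta^\alpha$ of $\mathcal{C}^{\mathrm{C}}$ with an auxiliary \emph{local} coframe $\eta^\mu$ of $\mathcal{C}^{\mathrm{R}}$, and defines the $\Reeb_\beta$ as part of the dual frame; it then has to remark that the conditions $\langle\eta^\mu,\Reeb_\beta\rangle=0$ are independent of the choice of the $\eta^\mu$ (they just say $\Reeb_\beta$ is a section of $\mathcal{D}^{\mathrm{R}}$) and that the $\Reeb_\beta$ are in fact global. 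You instead work on the tangent side with $\Tan M = \mathcal{D}^{\mathrm{C}} \oplus \mathcal{D}^{\mathrm{R}}$ and the bundle morphism $\Phi\colon \mathcal{D}^{\mathrm{R}} \to M\times\R^k$, $v\mapsto(\eta^\alpha(v))$, whose injectivity is exactly axiom (iii); this buys you existence, uniqueness, smoothness and globality in one stroke, with no auxiliary choice to quotient out. For commutativity the two arguments have identical content: the paper shows directly that $[\Reeb_\alpha,\Reeb_\beta]$ is annihilated by all $\eta^\gamma$ and all $\d\eta^\gamma$ and invokes (iii$'$), whereas you obtain the $\d\eta^\gamma$ half by citing the involutivity lemma (whose proof is the same Cartan computation) and then kill the $\eta^\gamma$-components; your explicit calculation of $\inn([\Reeb_\alpha,\Reeb_\beta])\eta^\gamma=0$ is correct. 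The one point you rightly flag, smoothness of the fibrewise inverse of $\Phi$, is indeed automatic because $\mathcal{D}^{\mathrm{R}}$ is a regular distribution, hence a rank-$k$ subbundle, and a fibrewise bijective bundle morphism over the identity is a bundle isomorphism. No gaps.
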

\begin{proof}
We take
$\Tan^*M = 
\mathcal{C}^{\mathrm{C}} \oplus \mathcal{C}^{\mathrm{R}}$.
The $\eta^\alpha$ are a global frame for the contact codistribution
$\mathcal{C}^{\mathrm{C}}$;
we can find a local frame $\eta^\mu$ for $\mathcal{C}^{\mathrm{R}}$.
So, $(\eta^\alpha;\eta^\mu)$ is a local frame for $\Tan^*M$.
The corresponding dual frame for $\Tan M$
is constituted by (smooth) vector fields
$(\Reeb_\beta;\Reeb_\nu)$,
where the $\Reeb_\beta$ are uniquely defined by
$$
    \langle \eta^\alpha , \Reeb_\beta \rangle =
     \delta^\alpha_{\:\beta}
    \,,\quad
    \langle \eta^\mu , \Reeb_\beta \rangle = 0 
    \,.
$$
Notice that the second set of relations
does not depend on the choice of the $\eta^\mu$
and simply means that
the $\Reeb_\beta$ are sections of 
$(\mathcal{C}^{\mathrm{R}})^\circ =
\mathcal{D}^{\mathrm{R}}$,
the Reeb distribution;
in other words, it means that, for every $\alpha$,
$\inn({\Reeb_\beta}) \d\eta^\alpha = 0$.
Notice finally that, since the $\eta^\alpha$ are globally defined, $\Reeb_\alpha$ also are.
    
To proof that 
the Reeb vector fields commute,
notice that
$$
\inn([\Reeb_\alpha,\Reeb_\beta]) \eta^\gamma = 0
\,,\quad
\inn([\Reeb_\alpha,\Reeb_\beta]) \d\eta^\gamma = 0
\,,
$$
which is a consequence of their definition
and of the above formula for 
$\inn([X,X'])$
when applied to them.
\end{proof}

\begin{prop}\label{prop-adapted-coord}
On a $k$-contact manifold there exist local coordinates
$(x^I;s^\alpha)$
such that
$$
    \Reeb_\alpha = \frac{\partial}{\partial s^\alpha}
    \,,\quad
    \eta^\alpha = \d s^\alpha - f_I^\alpha(x) \,\d x^I
    \,,
$$
where $f_I^\alpha(x)$ are functions depending only on the~$x^I$.
\end{prop}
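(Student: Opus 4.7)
The plan is to apply the simultaneous straightening theorem (flow-box theorem for commuting vector fields) to the Reeb vector fields, and then to use the defining relations \eqref{reebcontact} together with Cartan's magic formula to force the form of the $\eta^\alpha$.

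First, by Theorem \ref{reebvf} the Reeb vector fields $\Reeb_1,\ldots,\Reeb_k$ are pointwise linearly independent (they form a basis of the rank-$k$ distribution $\mathcal{D}^{\mathrm{R}}$) and mutually commute. The standard simultaneous straightening theorem for commuting vector fields then yields, around any point of $M$, a local chart $(x^I;s^\alpha)$ with $1\le\alpha\le k$ in which
$$
\Reeb_\alpha = \frac{\partial}{\partial s^\alpha}\,.
$$

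Second, expand each $\eta^\alpha$ in this chart as
$$
\eta^\alpha = A^\alpha_{\,\beta}(x,s)\,\d s^\beta + B^\alpha_{\,I}(x,s)\,\d x^I \,.
$$
The first relation in \eqref{reebcontact}, $\inn(\Reeb_\beta)\eta^\alpha=\delta^\alpha_{\,\beta}$, immediately gives $A^\alpha_{\,\beta}=\delta^\alpha_{\,\beta}$, so $\eta^\alpha = \d s^\alpha + B^\alpha_{\,I}(x,s)\,\d x^I$.

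Third, I combine both relations in \eqref{reebcontact} via Cartan's formula:
$$
\Lie_{\Reeb_\beta}\eta^\alpha = \inn(\Reeb_\beta)\d\eta^\alpha + \d\bigl(\inn(\Reeb_\beta)\eta^\alpha\bigr) = 0 + \d(\delta^\alpha_{\,\beta}) = 0 \,.
$$
Since $\Reeb_\beta=\partial/\partial s^\beta$, this says $\partial B^\alpha_{\,I}/\partial s^\beta=0$ for every $\beta$, i.e.\ the coefficients $B^\alpha_{\,I}$ depend only on the $x^I$. Setting $f^\alpha_I(x):=-B^\alpha_{\,I}(x)$ yields the announced expression $\eta^\alpha = \d s^\alpha - f^\alpha_I(x)\,\d x^I$.

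The only genuine ingredient is the simultaneous straightening of commuting vector fields; everything else is a short algebraic manipulation using \eqref{reebcontact} and Cartan's formula, so I do not expect any serious obstacle. One should just be careful that the chart provided by the straightening theorem has $k$ coordinates $s^\alpha$ matching the Reeb directions and an arbitrary completion $x^I$ transverse to them, which is guaranteed because the $\Reeb_\alpha$ are linearly independent at the point.
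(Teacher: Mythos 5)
Your proposal is correct and follows essentially the same route as the paper: simultaneous straightening of the commuting Reeb vector fields, then using $\inn(\Reeb_\beta)\eta^\alpha=\delta^\alpha_{\,\beta}$ to fix the $\d s$-components, and finally $\inn(\Reeb_\beta)\d\eta^\alpha=0$ to kill the $s$-dependence of the remaining coefficients. Your use of Cartan's formula to phrase that last step as $\Lie_{\Reeb_\beta}\eta^\alpha=0$ is just a slightly tidier packaging of the paper's direct computation of $\d\eta^\alpha$.
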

\begin{proof}
The Reeb vector fields commute,
so there exist local coordinates 
$(x^I;s^\alpha)$
where
they can be straightened out simultaneously
(see for instance 
\cite[p.\,234]{Lee2013}):
$\displaystyle\Reeb_\alpha = \frac{\partial}{\partial s^\alpha}$.
\\
Now we express the contact forms in these coordinates.
First, relation
$
\inn(\Reeb_\beta) \eta^\alpha = \delta^\alpha_{\,\beta}
$
implies that
$
\eta^\alpha = \d s^\alpha - f_I^\alpha \,\d x^I
$,
where the functions $f_I^\alpha$ depend in principle 
on all the coordinates $(x^I;s^\alpha)$.
But then 
$\d \eta^\alpha = \d x^I \wedge \d f_I^\alpha$,
and the only way to ensure that
$\inn(\Reeb_\beta) \d \eta^\alpha = 0$
is that 
$\partial f_I^\alpha / \partial s^\beta = 0$.
\end{proof}

We will say that the coordinates provided by this proposition are
$\textbf{adapted}$
to the $k$-contact structure.

\begin{exmpl}
\label{example-canmodel}
\rm
(Canonical $k$-contact structure).
Given $k \geq 1$,
the manifold
$
M = (\oplus^k \Tan^*Q) \times \R^k
$
has a canonical $k$-contact structure defined by the 1-forms
$$
\eta^\alpha = \d s^\alpha - \theta^\alpha
\,,
$$
where
$s^\alpha$
is the $\alpha$-th cartesian coordinate of $\R^k$,
and 
$\theta^\alpha$
is the pull-back of the canonical 1-form of $\Tan^*Q$
with respect to the projection
$M \to \Tan^*Q$
to the $\alpha$-th direct summand.

Using coordinates $q^i$ on~$Q$ and natural coordinates
$(q^i,p_i^\alpha)$ on the $\alpha$-th copy of $\Tan^*Q$,
the local expressions of the contact forms are
$$
\eta^\alpha = \d s^\alpha - p^\alpha_i \,\d q^i
\,,
$$
from which 
$\d \eta^\alpha = \d q^i \wedge \d p_i^\alpha$,
the Reeb distribution is
$\mathcal{D}^{\mathrm{R}} = 
\langle 
\partial / \partial s^1 \ldots, \partial / \partial s^k
\rangle$,
and the Reeb vector fields are
$$
\Reeb_\alpha = \frac{\partial}{\partial s^\alpha}
\,.
$$
\end{exmpl}

\begin{exmpl}
\label{example2}
\rm
(Contactification of a $k$-symplectic manifold).
Let $(P,\omega^\alpha)$ be a $k$-symplectic manifold such that $\omega^\alpha=-\d\theta^\alpha$,
and consider $M = P \times \R^k$.
Denoting by $(s^1,\ldots,s^k)$ the cartesian coordinates of~$\R^k$,
and representing also by $\theta^\alpha$ the pull-back of~$\theta^\alpha$
to the product, we consider the 1-forms
$\eta^\alpha = \d s^\alpha - \theta^\alpha$ on~$M$.
Then $(M,\eta^\alpha)$ is a $k$-contact manifold because
$\mathcal{C}^{\mathrm{C}} = \langle \eta^1,\ldots,\eta^k \rangle$
has rank~k,
$\d \eta^\alpha = -\d\theta^\alpha$,
and
$\mathcal{D}^{\mathrm{R}}= \bigcap_\alpha\ker \widehat{\d\theta^\alpha}=
\langle \partial/\partial s^1,\ldots,\partial/\partial s^k\rangle$
has rank~k since $(P,\omega^\alpha)$ is $k$-symplectic,
and the last condition is immediate.

In particular, if $k=1$, let $P$ be a manifold endowed with a 1-form $\theta$,
and consider $M = P \times \R$.
Denoting by $s$ the cartesian coordinate of~$\R$,
and representing again by $\theta$ the pull-back of~$\theta$
to the product,
we consider the 1-form
$\eta = \d s - \theta$ on~$M$.
Then 
$\mathcal{C}^{\mathrm{C}} = \langle \eta \rangle$
has rank~1,
$\d \eta = -\d\theta$,
and
$\mathcal{D}^{\mathrm{R}} = \ker \widehat{\d\theta}$
has rank~1 if, and only if, $\d\theta$ is a symplectic form on~$P$.
In this case $M$ becomes a 1-contact manifold.
\end{exmpl}

\begin{exmpl}\rm
\label{example3}
Let $P = \R^6$ 
with coordinates
$(x,y,p,q,s,t)$.
The differential 1-forms
$$
\eta^1 = \d s - \frac12 ( y \d x - x \d y )
\quad ,\quad
\eta^2 = \d t - p \d x - q \d y
$$
define a 2-contact structure on~$P$.
Let us check the conditions of the definition.
First, the 1-forms are clearly linearly independent.
Then,
$$
\d \eta^1 = \d x \wedge \d y 
\quad ,\quad
\d \eta^2 = \d x \wedge \d p + \d y \wedge \d q
\ ,
$$
from which
$
\mathcal{D}^{\mathrm{R}} = 
\langle \partial/\partial s, \partial/\partial t \rangle
$,
which has rank~2.
Obviously none of these two vector fields belongs to the
kernel of the 1-forms, which is condition (iii). 
The Reeb vector fields are
$$
\Reeb_1 = \frac{\partial}{\partial s}
\quad ,\quad
\Reeb_2 = \frac{\partial}{\partial t}
\ .
$$
\end{exmpl}

\subsection{A Darboux theorem for \texorpdfstring{$k$}--contact manifolds}

The following result ensures the existence of canonical coordinates
for a particular kind of $k$-contact manifolds:

\begin{thm}[$k$-contact Darboux theorem]
\label{Darboux k-contact}
Let  $(M,\eta^\alpha)$ be a $k$--contact manifold of dimension $n+kn+k$
such that there exists an integrable subdistribution ${\cal V}$ of ${\cal D}^{\rm C}$
with ${\rm rank}\,{\cal V}=nk$.
Around every point of $M$, there exists a local chart of coordinates 
$(U;q^i,p^\alpha_i,s^\alpha)$, $1\leq\alpha\leq k \ ,\ 1\leq i \leq n$,
 such that
$$
\restr{\eta^\alpha}{U}=\d s^\alpha-p_i^\alpha\,\d q^i \;.
$$
In these coordinates,
$$
{\cal D}^{\rm R}\vert_{U}=\left\langle\Reeb_\alpha=\frac{\partial}{\partial s^\alpha}\right\rangle 
\,, \quad
{\cal V}\vert_{U}=\left\langle\frac{\partial}{\partial p_i^\alpha}\right\rangle
\ .
$$
These are the so-called \textbf{canonical} or \textbf{Darboux coordinates} of the $k$-contact manifold.
\end{thm}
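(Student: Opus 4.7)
The plan is to reduce the $k$-contact Darboux theorem to Awane's $k$-symplectic Darboux theorem by passing to a local transversal of the Reeb foliation.

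First I would apply Proposition \ref{prop-adapted-coord} to obtain local adapted coordinates $(x^I; s^\alpha)$, $I=1,\ldots,n+nk$, in which $\Reeb_\alpha = \partial/\partial s^\alpha$ and $\eta^\alpha = \d s^\alpha - f_I^\alpha(x)\,\d x^I$, with $f_I^\alpha$ depending only on $x$. In particular, $\d\eta^\alpha$ has no $\d s^\beta$ components, and the natural frame $e_I := \partial_{x^I} + f_I^\alpha\,\partial_{s^\alpha}$ of $\mathcal{D}^{\mathrm{C}}$ commutes with every Reeb vector field, $[\Reeb_\alpha, e_I] = 0$. I would then consider the transverse slice $N = \{s^\alpha = 0\}$, of dimension $n+nk$, and the closed $2$-forms $\omega^\alpha := -\d\eta^\alpha|_N$. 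From $\bigcap_\alpha \ker \widehat{\d\eta^\alpha} = \mathcal{D}^{\mathrm{R}}$ together with $\mathcal{D}^{\mathrm{R}} \cap TN = \{0\}$, one obtains $\bigcap_\alpha\ker \widehat{\omega^\alpha} = \{0\}$ on $N$, so $\omega^\alpha$ satisfies the $k$-symplectic rank condition.

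The main step is to produce, on $(N,\omega^\alpha)$, an integrable Lagrangian polarization out of $\mathcal{V}$. Using the splitting $TM|_N = \mathcal{D}^{\mathrm{C}}|_N \oplus \mathcal{D}^{\mathrm{R}}|_N$, I would project $\mathcal{V}|_N$ along $\mathcal{D}^{\mathrm{R}}$ to obtain a rank-$nk$ subbundle $\bar{\mathcal{V}} \subset TN$. Isotropy $\omega^\alpha|_{\bar{\mathcal{V}}\times\bar{\mathcal{V}}} = 0$ is immediate from the Cartan identity $\d\eta^\alpha(V,W) = -\eta^\alpha([V,W])$, which vanishes for $V,W \in \mathcal{V}$ by integrability and $\mathcal{V}\subset\ker\eta^\alpha$, combined with the fact that projecting along $\mathcal{D}^{\mathrm{R}}\subset\ker\d\eta^\alpha$ preserves the $\d\eta^\alpha$-pairing. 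The integrability of $\bar{\mathcal{V}}$ is the key technical obstacle: writing a local frame as $V_a = A_a^I(x,s)\,e_I$, one has $[\Reeb_\beta, V_a] = (\partial_{s^\beta} A_a^I)\,e_I$, and the integrability condition for $\mathcal{V}$ together with the structure equation $[e_I,e_J] = (\partial_I f_J^\gamma - \partial_J f_I^\gamma)\Reeb_\gamma$ imposes algebraic constraints on the $A_a^I$ that allow rectifying the frame to one with $s$-independent coefficients, so that $\bar{\mathcal{V}}$ descends to an integrable distribution on $N$.

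Once $(N, \omega^\alpha, \bar{\mathcal{V}})$ has been identified as a $k$-symplectic manifold with integrable Lagrangian polarization, Awane's Darboux theorem supplies coordinates $(q^i, p_i^\alpha)$ on $N$ with $\omega^\alpha = \d q^i \wedge \d p_i^\alpha$ and $\bar{\mathcal{V}} = \langle \partial/\partial p_i^\alpha \rangle$. I would extend these to $M$ as functions independent of the $s^\beta$ and combine them with $(s^\alpha)$ to obtain coordinates on $M$ in which $\d\eta^\alpha = \d q^i \wedge \d p_i^\alpha$ globally. Since $\iota(\Reeb_\alpha)\eta^\beta = \delta^\beta_\alpha$ and $\iota(\Reeb_\alpha)\d\eta^\beta = 0$ imply $\Lie_{\Reeb_\alpha}\eta^\beta = 0$, we may write $\eta^\alpha = \d s^\alpha + \sigma^\alpha(q,p)$ with $\d \sigma^\alpha = \d q^i \wedge \d p_i^\alpha$. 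The Poincaré lemma then furnishes functions $g^\alpha(q,p)$ with $\sigma^\alpha = -p_i^\alpha\,\d q^i + \d g^\alpha$, and the substitution $s^\alpha \mapsto s^\alpha - g^\alpha(q,p)$ preserves both the Reeb vector fields and the distribution $\mathcal{V}$, yielding the canonical form $\eta^\alpha = \d s^\alpha - p_i^\alpha\,\d q^i$.
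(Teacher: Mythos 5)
Your overall route is the same as the paper's: straighten the Reeb vector fields with Proposition \ref{prop-adapted-coord}, pass to a local transversal of the Reeb foliation (the paper works with the local quotient $M/\mathcal{D}^{\mathrm{R}}$ rather than the slice $\{s^\alpha=0\}$, which is the same thing locally), show that the descended data form a $k$-symplectic manifold with an integrable Lagrangian polarization, invoke Awane's Darboux theorem, and lift back. Your isotropy computation for $\mathcal{V}$ coincides with the paper's, and your closing normalization $s^\alpha \mapsto s^\alpha - g^\alpha(q,p)$ via the Poincar\'e lemma is actually spelled out more carefully than in the paper's proof, which stops at asserting that the lifted coordinates ``verify the conditions of the theorem''.

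The gap is exactly at the step you yourself flag as the key technical obstacle, and you do not close it. You claim that integrability of $\mathcal{V}$ together with the structure equation for $[e_I,e_J]$ forces constraints on the coefficients $A_a^I(x,s)$ that permit rectifying the frame to $s$-independent coefficients; no such constraint follows from the stated hypotheses. Take $k=1$, $n=1$, $M=\R^3$ with $\eta = \d s - p\,\d q$, and $\mathcal{V} = \langle\, \partial_p + g(s)(\partial_q + p\,\partial_s)\,\rangle$ for a nonconstant $g$. This is a rank-one (hence integrable) subdistribution of $\mathcal{D}^{\mathrm{C}} = \ker\eta$ of the required rank $nk=1$, but $[\Reeb, V] = g'(s)(\partial_q + p\,\partial_s)$ is not a section of $\mathcal{V}\oplus\mathcal{D}^{\mathrm{R}}$, so $\mathcal{V}$ does not project to a well-defined distribution on the transversal; moreover, no Darboux chart with $\mathcal{V} = \langle \partial/\partial p\rangle$ can exist, since in such a chart the invariance $[\Reeb,\Gamma(\mathcal{V})]\subset\Gamma(\mathcal{V})$ would hold automatically. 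What is needed is the additional hypothesis that $\mathcal{V}$ be invariant under the Reeb flow (equivalently, that $\mathcal{V}\oplus\mathcal{D}^{\mathrm{R}}$ be involutive); granted that, your rectification and the remainder of your argument go through. Be aware that the paper's own proof makes the same silent assumption when it introduces ``the distribution $\widetilde{\mathcal{V}}$ induced in $\widetilde{M}$ by $\mathcal{V}$'' without checking that $\mathcal{V}$ is projectable, so your write-up has at least the merit of making the missing step visible.
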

\begin{proof}
{\bf (i)}: By Proposition \ref{prop-adapted-coord}, there exists a chart $(y^I;s^\alpha)$ of adapted coordinates around $p$ such that
$$
\Reeb_\alpha =  \frac{\partial}{\partial s^\alpha}
\,,\quad 
\eta^\alpha = \d s^\alpha-f^\alpha_I(y)\d y^I .
$$
Therefore, we can locally construct the quotient manifold 
$M/\mathcal{D}^{\mathrm{R}}\equiv\widetilde M$, with projection $\tilde\tau\colon M\to\widetilde M$,
and local coordinates $(\tilde y^I)$.

{\bf (ii)}: The distribution $\mathcal{D}^{\mathrm{C}}$, with ${\rm rank}\,\mathcal{D}^{\mathrm{C}}=nk+k$,
is $\bar\tau$-projectable because,
for every $\Reeb_\alpha\in\vf(\mathcal{D}^{\mathrm{R}})$, $Z\in\vf(\mathcal{D}^{\mathrm{C}})$
and $\d\eta^\beta$, we have
\begin{equation*}
\inn([\Reeb_\alpha,Z])\d\eta^\beta=
    \Lie_{\Reeb_\alpha}i(Z)\d\eta^\beta-\inn(Z)\Lie_{\Reeb_\alpha}\d\eta^\beta=
    -\inn(Z)\d\inn(\Reeb_\alpha)\eta^\beta=
    -\inn(Z)\d\delta_\alpha^\beta=0 \ ,
\end{equation*}
and so $[\Reeb_\alpha,Z]\in\vf(\mathcal{D}^{\mathrm{R}})$.

(Observe that this property is also a consequence of the condition (iii) in Theorem \ref{kconman}).

{\bf (iii)}:  For every $\beta$, the forms $\d\eta^\beta$
are  $\bar\tau$-projectable because,  by Theorem \ref{reebvf},
for every $\Reeb_\alpha\in\vf(\mathcal{D}^{\mathrm{R}})$, 
we have that $i(\Reeb_\alpha)d\eta^\beta=0$;
and hence
\beann
    \Lie_{\Reeb_\alpha}d\eta^\beta=\d \inn(\Reeb_\alpha)\eta^\beta=\d\delta_\alpha^\beta=0 \ .
\eeann
The $\tilde\tau$-projected forms $\tilde\omega^\beta\in\df^2(\widetilde M)$
such that $\d\eta^\beta=\tilde\tau^*\tilde\omega^\beta$
are obviously closed.
In coordinates they read $\tilde\omega^\beta=\d\tilde f^\beta_I(\tilde y)\wedge\d\tilde y^I$.

In addition, for every $Z,Y\in\Gamma(\mathcal{V})$ we have that,
as $\mathcal{V}$ is involutive,
$$
    \inn(Z)i(Y)\d\eta^\beta=\inn(Z)(\Lie_{Y}\eta^\beta-\d i(Y)\eta^\beta)=
    \inn(Z)\Lie_{Y}\eta^\beta=\Lie_{Y}\inn(Z)\eta^\beta-\inn([Y,Z])\eta^\beta=0 \ .
$$
Denoting by $\widetilde{\mathcal{V}}$ the distribution induced in $\widetilde{M}$
by ${\cal V}$ (which has ${\rm rank}\,\widetilde{\mathcal{V}}=kn$),
then, for every $\tilde Z,\tilde Y\in\Gamma(\widetilde{\mathcal{V}})$
if $Z,Y\in\Gamma(\mathcal{V})$ are such that 
$\tilde\tau_*Z=\tilde Z$, $\tilde\tau_*Y=\tilde Y$, we obtain that
\beq
    \label{primer}
    0=\inn(Z)\inn(Y)\d\eta^\beta=
    i(Z)i(Y)(\tilde\tau^*\tilde\omega^\beta)=
    \tilde\tau^*\inn(\tilde Z)i(\tilde Y)\tilde\omega^\beta \ ,
\eeq
and, as $\tilde\tau$ is a submersion, the map $\tilde\tau^*$ is injective and,
from \eqref{primer} we conclude that 
$\inn(\tilde Z)\inn(\tilde Y)\tilde\omega^\beta=0$.
(Observe that this proof is independent of the representative vector fields $Y,Z$ used,
because two of them differ in an element of $\ker\,\bar\tau_*=\Gamma(\mathcal{D}^{\mathrm{R}})$).
Thus we have proved that, for every $\beta$, we have
that $\tilde\omega^\beta\vert_{\widetilde{\mathcal{V}}\times\widetilde{\mathcal{V}}}=0$.

Finally, as a consequence of (ii), we have that 
$$
\ker\,\tilde\omega^1\cap\dots\cap\ker\,\tilde\omega^k=\{0\} \ ,
$$
Thus we conclude that
$(\widetilde M,\tilde\omega^\alpha,\widetilde{\mathcal{V}})$
is a $k$-symplectic manifold.

{\bf (iv)}: Therefore, by the Darboux theorem for $k$-symplectic manifolds \cite{Awane1992},
there are local charts of coordinates $(\tilde U;\tilde q^i,\tilde p_i^\alpha)$,
$1\leq i\leq n$, in $\widetilde M$,
 such that 
$$
\tilde\omega^\alpha\vert_{\tilde U}=\d\tilde q^i\wedge\d\tilde p^\alpha_i
\quad ; \quad
\widetilde{\mathcal{V}}\vert_{\tilde U}=
\left\langle\displaystyle\frac{\partial}{\partial\tilde p^\alpha_i}\right\rangle \ .
$$
Therefore, in $U=\tilde\tau^{-1}(\tilde U)\subset M$ we can take the coordinates 
$(y^I,s^\alpha)=(q^i,p^\alpha_i,s^\alpha)$, with
$q^i=\tilde q^i\circ\tilde\tau$ and
$p_i^\alpha=\tilde p_i^\alpha\circ\tilde\tau$
verifying the conditions of the theorem.
\end{proof}

This theorem allows us to consider the manifold presented in the example \ref{example-canmodel} as a canonical model
for these kinds of $k$-contact manifolds.
Furthermore, if $(M,\eta^\alpha)$ is a contactification of a $k$-symplectic manifold (example \ref{example2}), then there trivially exist Darboux coordinates.

\section{\texorpdfstring{$k$}--contact Hamiltonian systems}
\label{sect-hamsyst}

Using the geometric framework introduced in the previous section,
we are ready to deal with 
Hamiltonian systems with dissipation in field theories.

\begin{dfn}
\label{kconham}
A \textbf{$k$-contact Hamiltonian system} is a family $(M,\eta^\alpha,\H)$,
where $(M,\eta^\alpha)$ is a $k$-contact manifold,
and $\H\in \Cinfty(M)$ is called a \textbf{Hamiltonian function}.

The \textbf{$k$-contact Hamilton--De Donder--Weyl equations} for a map
$\psi \colon D\subset\R^k \to M$ is
\begin{equation}
\begin{cases}
i(\psi'_\alpha)\d\eta^\alpha = \big(\d\H - (\Lie_{\Reeb_\alpha}\H)\eta^\alpha\big)\circ\psi \ ,\\
i(\psi'_\alpha)\eta^\alpha = - \H\circ\psi \ .
\end{cases}
\label{hec}
\end{equation}
\end{dfn}

Let us express these equations in coordinates.
First, consider adapted coordinates $(x^I;s^\alpha)$,
with 
$R_\alpha = \partial / \partial s^\alpha$,
$\eta^\alpha = \d s^\alpha - f_I^\alpha(x) \,\d x^I$,
and
$\d \eta^\alpha = \frac12 \omega^\alpha_{IJ}\d x^I\wedge\d x^J$,
with
$\ds
\omega^\alpha_{IJ} = 
\frac{\partial f^\alpha_I}{\partial x^J} - 
\frac{\partial f^\alpha_J}{\partial x^I} 
$.
The map $\psi$ is expressed as
$\psi(t) = (x^I(t),s^\beta(t))$,
and
$\psi'_\alpha = 
(x^I,s^\beta; 
\partial x^I/\partial t^\alpha, \partial s^\beta/\partial t^\alpha
)$.
Then,
Hamilton--De Donder--Weyl equations
read
$$
\begin{cases}
\ds
\derpar{x^J}{t^\alpha} \, \omega_{JI}^\alpha =
\derpar{\H}{x^I} + \derpar{\H}{s^\alpha} f^\alpha_I \ ,
\\[2ex]
\ds
\derpar{s^\alpha}{t^\alpha} 
- f_I^\alpha \,\derpar{x^I}{t^\alpha} = -\H \ .
\end{cases}
$$
Analogously, in canonical coordinates, if 
$\psi=(q^i,p^\alpha_i,s^\alpha)$, 
these equations read
\beq
\begin{cases}
    \displaystyle \frac{\partial q^i}{\partial t^\alpha} = \frac{\partial\H}{\partial p^\alpha_i}\circ\psi \ ,
\\[15pt]
    \displaystyle \frac{\partial p^\alpha_i}{\partial t^\alpha} = 
    -\left(\frac{\partial\H}{\partial q^i}+ p_i^\alpha\frac{\partial\H}{\partial s^\alpha}\right)\circ\psi \ ,
\\[15pt]
    \displaystyle \frac{\partial s^\alpha}{\partial t^\alpha} = 
    \left(p_i^\alpha\frac{\partial\H}{\partial p^\alpha_i}-\H\right)\circ\psi\ ,
\end{cases}
\label{coor1}
\eeq

In order to give an alternative geometrical interpretation
we consider:

\begin{dfn}
\label{fieldeqk}
Let $(M,\eta^\alpha,\H)$ be a $k$-contact Hamiltonian system.
The \textbf{$k$-contact Hamilton--De Donder--Weyl equations} 
for a $k$-vector field ${\bf X}=(X_1,\dots,X_k)$ in $M$ are
\beq \begin{cases}
    \inn({X_\alpha})\d\eta^\alpha=\d\H-(\Lie_{\Reeb_\alpha}\H)\eta^\alpha \ ,\\
    \inn(X_\alpha)\eta^\alpha=-\H 
    \:.
    \end{cases}
    \label{fieldcontact}
\eeq
A $k$-vector field which is solution to these equations is called a
\textbf{Hamiltonian $k$-vector field}.
\end{dfn}

\begin{prop}
The $k$-contact Hamilton--De Donder--Weyl equations \eqref{fieldcontact} admit solutions.
They are not unique if $k>1$.
\end{prop}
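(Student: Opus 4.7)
The plan is to decouple the two equations in \eqref{fieldcontact}: first produce a $k$-vector field solving the 1-form equation, and then adjust it along the Reeb distribution to fulfil the scalar equation. As a preliminary check, the right-hand side of the first equation already lies in $\mathcal{C}^{\mathrm{R}} = (\mathcal{D}^{\mathrm{R}})^\circ$: contracting with $\Reeb_\beta$ and using the defining relations \eqref{reebcontact} gives
$$
\inn(\Reeb_\beta)\bigl(\d\H - (\Lie_{\Reeb_\alpha}\H)\eta^\alpha\bigr) = \Lie_{\Reeb_\beta}\H - \Lie_{\Reeb_\beta}\H = 0\,.
$$

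The central step will be to prove that the bundle morphism
$$
\Psi\colon \oplus^k \Tan M \longrightarrow \Tan^*M\,,\qquad
(X_1,\ldots,X_k)\longmapsto \sum_\alpha \inn(X_\alpha)\d\eta^\alpha
$$
is fibrewise surjective onto $\mathcal{C}^{\mathrm{R}}$. For each $\alpha$, skew-symmetry of $\d\eta^\alpha$ yields $\mathrm{image}(\widehat{\d\eta^\alpha}) = (\ker\widehat{\d\eta^\alpha})^\circ$, and summing together with the defining relation $\mathcal{D}^{\mathrm{R}} = \bigcap_\alpha \ker\widehat{\d\eta^\alpha}$ produces $\mathrm{image}(\Psi) = (\mathcal{D}^{\mathrm{R}})^\circ = \mathcal{C}^{\mathrm{R}}$. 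Combined with the preliminary observation, this gives a solution $\mathbf{X}_0 = ((X_0)_1,\ldots,(X_0)_k)$ of the first of \eqref{fieldcontact}. To fulfil the scalar equation I would then replace $\mathbf{X}_0$ by $\mathbf{X} = \mathbf{X}_0 + (c^\beta_\alpha\Reeb_\beta)_\alpha$ for suitable functions $c^\beta_\alpha$: because $\inn(\Reeb_\beta)\d\eta^\gamma = 0$, the first equation is preserved, while the second becomes the single linear condition $\sum_\alpha c^\alpha_\alpha = -\H - \sum_\alpha \inn((X_0)_\alpha)\eta^\alpha$ on the trace of $(c^\beta_\alpha)$, which is trivially solvable.

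Uniqueness will fail for $k>1$ for two independent reasons. First, the kernel of $\Psi$ at each point has dimension $k\dim M - (\dim M - k) = (k-1)\dim M + k$, which is strictly positive as soon as $k>1$: any section of $\ker\Psi$ can be added to $\mathbf{X}_0$ without altering the first equation, and after a further scalar adjustment we still satisfy the second. Secondly, among the $k^2$ correction coefficients $c^\beta_\alpha$ only the trace $c^\alpha_\alpha$ appears in the scalar equation, so the $k(k-1)$ off-diagonal entries remain entirely free. In coordinates this is precisely the degeneracy already visible in \eqref{coor1}, where only one diagonal combination of $p$-components and one diagonal combination of $s$-components is fixed per index. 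The only delicate point in the whole argument is the fibrewise surjectivity of $\Psi$; once that is in place, everything else reduces to elementary linear algebra at the level of the fibres of the bundles involved.
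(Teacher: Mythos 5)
Your argument is correct and follows essentially the same route as the paper: you work with the same bundle morphism (the paper's $\tau$), establish that its image is $\mathcal{C}^{\mathrm{R}}=(\mathcal{D}^{\mathrm{R}})^{\circ}$ (the paper does this through the transposed morphism ${}^t\tau=-\rho$, while you compute the image directly from the skew-symmetry of the $\d\eta^\alpha$), check that the right-hand side is annihilated by the Reeb vector fields, and then satisfy the scalar equation by a correction along the Reeb distribution. Your explicit parametrization of the corrections $c^\beta_\alpha$, of which only the trace is constrained, in fact makes the non-uniqueness for $k>1$ more transparent than the paper's ``non-uniqueness is obvious.''
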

\begin{proof}
A $k$-vector field $\mathbf{X}$ 
can be decomposed as
$\mathbf{X} =
\mathbf{X}^\mathrm{C} + 
\mathbf{X}^\mathrm{R} 
$
corresponding to the direct sum decomposition
$
\Tan M = \mathcal{D}^{\mathrm{C}} \oplus \mathcal{D}^{\mathrm{R}}
$.
If $\mathbf{X}$ is a solution to the 
$k$-contact Hamilton--De Donder--Weyl equations,
then 
$\mathbf{X}^\mathrm{C}$ 
is a solution to the first of these equations and
$\mathbf{X}^\mathrm{R}$ 
of the second one.

Now we introduce two vector bundle maps:
$$
\rho \colon \Tan M \to \oplus^k \,\Tan^*M
\,,\quad
\rho(v) = \left( \widehat{\d\eta^1}(v), \ldots,\widehat{\d\eta^k}(v) \right)
\,,
$$
$$
\tau \colon \oplus^k \Tan M \to \Tan^*M
\,,\quad
\tau(v_1\ldots,v_k) = \widehat{\d\eta^\alpha}(v_\alpha)
\,.
$$
Then, notice the following facts:
\begin{itemize}
\item 
$\ker \rho = \mathcal{D}^\mathrm{R}$ is the Reeb distribution.
\item
With the canonical identification $(E \oplus F)^* = E^* \oplus F^*$,
the transposed morphism of~$\tau$
is ${}^t \tau = -\rho$.
The proof uses that $^t  \widehat{\d\eta^\alpha} = -\widehat{\d\eta^\alpha}$.
\item
The first Hamilton--De Donder--Weyl equation for a $k$-vector field $\mathbf{X}$
can be written as
$\tau \circ \mathbf{X} = \d\H - (\Lie_{R_\alpha}\H) \eta^\alpha$.
\end{itemize}
A sufficient condition for this linear equation 
to have solutions $\mathbf{X}$
is that the right-hand side be in the image of~$\tau$,
that is,
to be annihilated by any section of 
$\ker {}^t\tau = \mathcal{D}^\mathrm{R}$.
But one easily computes that
$\ds
\inn(\Reeb_\beta)
(\d\H - (\Lie_{\Reeb_\alpha}\H) \eta^\alpha)
= 0$,
for any~$\beta$.
So we conclude that the first Hamilton--De Donder--Weyl equation
has solutions, 
and in particular solutions 
$\mathbf{X}^\mathrm{C}$ 
belonging to the contact distribution.

Finally, 
the second Hamilton--De Donder--Weyl equation for $\mathbf{X}$
admits solutions belonging to the Reeb distribution;
for instance,
$
\mathbf{X}^\mathrm{R} =
-\H \Reeb_1
$. 
Non-uniqueness is obvious.
\end{proof}

If ${\bf X}=(X_\alpha)$, 
is a $k$-vector field solution to equations \eqref{fieldcontact} and
$\displaystyle
X_\alpha= (X_\alpha)^\beta\derpar{}{s^\beta}+ (X_\alpha)^I\frac{\partial}{\partial x^I}$
is its expression in adapted coordinates of $M$,
then we have that
$$
\begin{cases}
\ds
(X_\alpha)^J \, \omega_{JI}^\alpha =
\derpar{\H}{x^I} + \derpar{\H}{s^\alpha} f^\alpha_I \,,
\\[2ex]
\ds
(X_\alpha)^\alpha
- f_I^\alpha \,(X_\alpha)^I= -\H \,.
\end{cases}
$$
In canonical coordinates, if
$\displaystyle
X_\alpha= (X_\alpha)^\beta\derpar{}{s^\beta}+ (X_\alpha)^i\frac{\partial}{\partial q^i}+
(X_\alpha)_i^\beta\frac{\partial}{\partial p_i^\beta}$, 
then
\beq
\begin{cases}
        \displaystyle(X_\alpha)^i= \frac{\partial\H}{\partial p^\alpha_i} \ ,\\[15pt]
        \displaystyle (X_\alpha)^\alpha_i = 
        -\left(\frac{\partial\H}{\partial q^i}+ p_i^\alpha\frac{\partial\H}{\partial s^\alpha}\right) \ ,\\[15pt]
        \displaystyle (X_\alpha)^\alpha = 
        p_i^\alpha\frac{\partial\H}{\partial p^\alpha_i}-\H\ ,
    \end{cases}
    \label{coor2}
\eeq

\begin{prop}
\label{prop1}
Let ${\bf X}$ be an \emph{integrable} $k$-vector field in $M$. 
Then
every integral section $\psi \colon D\subset\R^k \to M$ of ${\bf X}$
satisfies the $k$-contact equation (\ref{hec})
if, and only if,
${\bf X}$ is a solution to \eqref{fieldcontact}.
\end{prop}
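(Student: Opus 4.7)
The plan is to exploit the defining relation of an integral section, namely $\psi'_\alpha = X_\alpha \circ \psi$ for each $\alpha$, to transfer the $k$-contact Hamilton--De Donder--Weyl equations between the $k$-vector field formulation \eqref{fieldcontact} and the map formulation \eqref{hec} in a purely pointwise manner. Since both sides of \eqref{fieldcontact} are tensorial equalities of differential forms on $M$, composing them with $\psi$ and contracting with $\psi'_\alpha = X_\alpha \circ \psi$ will immediately yield \eqref{hec}, giving one direction of the equivalence.

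For the forward implication, I would take $\mathbf{X}=(X_1,\dots,X_k)$ solving \eqref{fieldcontact} and let $\psi$ be any integral section of~$\mathbf{X}$. Precomposing the first equation of \eqref{fieldcontact} with $\psi$ and replacing each $X_\alpha\circ\psi$ by $\psi'_\alpha$ gives $\inn(\psi'_\alpha)\d\eta^\alpha = \big(\d\H - (\Lie_{\Reeb_\alpha}\H)\eta^\alpha\big)\circ\psi$; the same substitution in the second equation gives $\inn(\psi'_\alpha)\eta^\alpha = -\H\circ\psi$. These are exactly the two conditions in \eqref{hec}.

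For the converse, the key input is integrability of $\mathbf{X}$: by Definition \ref{integsect}, every point $x \in M$ lies in the image of some integral section $\psi$, with $\psi(t_0)=x$ for some $t_0\in D$. By hypothesis this $\psi$ satisfies \eqref{hec}, and evaluating at $t_0$ while again using $\psi'_\alpha(t_0)=X_\alpha(\psi(t_0))=X_\alpha(x)$ recovers both equations of \eqref{fieldcontact} pointwise at~$x$. Since $x$ was arbitrary, $\mathbf{X}$ is a solution.

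I do not anticipate any serious obstacle: the statement is essentially a pointwise translation between two formulations of the same geometric equation, with integrability only needed in the converse direction to guarantee enough integral sections to cover~$M$. The only subtlety worth noting is that the argument does not require uniqueness of the integral section through a given point, nor any regularity beyond the smoothness already assumed throughout the paper.
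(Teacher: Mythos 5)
Your argument is correct and coincides with the paper's own (very terse) proof: both directions reduce to the substitution $\psi'_\alpha = X_\alpha\circ\psi$, with integrability invoked only in the converse to guarantee an integral section through every point of $M$. Nothing further is needed.
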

\begin{proof}
This is a direct consequence of equations
\eqref{hec} and \eqref{fieldcontact},
and the fact that any point of~$M$
is in the image of an integral section of~$\mathbf{X}$.
\end{proof}

\begin{obs}
\label{remark1}
{\rm
As in the $k$-symplectic case,
equations (\ref{hec}) and (\ref{fieldcontact})
are not, in general, fully equivalent, since
a solution to \eqref{hec} may not be
an integral section of an integrable $k$-vector field solution to \eqref{fieldcontact}.
This remark will be relevant in Section
\ref{sect-symms}.

Furthermore,
in addition to not being unique,
solutions to equations (\ref{fieldcontact})
are not necessarily integrable.
}
\end{obs}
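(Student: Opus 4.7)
The plan is to justify each of the three distinct assertions of the remark separately, relying on the coordinate form \eqref{coor2} of \eqref{fieldcontact} in canonical Darboux coordinates.

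For the non-uniqueness of solutions to \eqref{fieldcontact} when $k > 1$, I would inspect \eqref{coor2} and observe that the system pins down only the components $(X_\alpha)^i$, together with the diagonal traces $\sum_\alpha (X_\alpha)^\alpha_i$ and $\sum_\alpha (X_\alpha)^\alpha$; all off-diagonal components $(X_\alpha)^\beta$ and $(X_\alpha)^\beta_i$ with $\alpha \neq \beta$ remain entirely free. Hence the set of Hamiltonian $k$-vector fields is, at each point, an affine subspace of strictly positive dimension, which sharpens the existence statement of the preceding proposition.

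For the possible non-integrability, the key point is that integrability of $\mathbf{X} = (X_1, \ldots, X_k)$ demands $[X_\alpha, X_\beta] = 0$ for every pair $\alpha, \beta$, a condition that couples precisely the off-diagonal components left undetermined by \eqref{coor2}. The strategy is perturbative: starting from any particular Hamiltonian $k$-vector field $\mathbf{X}_0$, one produces a second solution $\mathbf{X}_0 + \mathbf{Y}$ by varying only the free components of $\mathbf{Y}$, chosen so that a commutator fails to vanish. Concretely, on the canonical model of Example~\ref{example-canmodel} with $k \geq 2$, taking $Y_1 = f(q^i)\,\partial/\partial s^2$ and $Y_\beta = 0$ otherwise affects only off-diagonal entries, so $\mathbf{X}_0 + \mathbf{Y}$ still solves \eqref{fieldcontact}; a direct computation shows the bracket with any $X_2$ having a non-trivial $\partial/\partial q^i$ component is non-zero whenever $f$ is non-constant. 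The only mildly delicate step is verifying that $\mathbf{Y}$ indeed respects the determined components, which is immediate from its form.

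For the non-equivalence of \eqref{hec} and \eqref{fieldcontact}, I would stress that \eqref{hec} is an algebraic constraint on the tangent vectors $\psi'_\alpha(t) \in \Tan_{\psi(t)}M$ defined only along the image of $\psi$, whereas being an integral section of an integrable Hamiltonian $k$-vector field requires globally defined, pairwise commuting vector fields on $M$. To realize a given solution $\psi$ of \eqref{hec} in this stronger form, one would have to extend $\psi'_\alpha$ off the image of $\psi$ to vector fields simultaneously satisfying \eqref{fieldcontact} and $[X_\alpha,X_\beta]=0$, and the main obstacle is that these requirements are jointly overdetermined: although extensions meeting \eqref{fieldcontact} abound (by the proposition just proved), the bracket conditions impose additional compatibilities along the image of $\psi$ that are not encoded in \eqref{hec} itself. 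This is the exact field-theoretic counterpart of the remark made after equation~\eqref{generic0} in the $k$-symplectic setting, and the same obstruction persists in the contact case.
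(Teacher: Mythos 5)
The paper does not actually prove this remark: it is asserted by analogy with the $k$-symplectic case, and the non-uniqueness claim merely recalls the preceding proposition, whose proof ends with ``non-uniqueness is obvious''. Your proposal therefore attempts more than the paper does, and two of your three arguments are essentially sound. Your reading of \eqref{coor2} is correct: by the summation convention only the $(X_\alpha)^i$ and the traces $(X_\alpha)^\alpha_i$, $(X_\alpha)^\alpha$ are prescribed, so for $k>1$ the solutions form an affine family of positive dimension at each point. Likewise $Y_1=f\,\partial/\partial s^2$, $Y_\beta=0$ for $\beta\neq1$, preserves \eqref{fieldcontact}, since $\inn(\partial/\partial s^2)\,\d\eta^1=0$ and $\inn(\partial/\partial s^2)\,\eta^1=0$. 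But your bracket claim is loose on two points: since $[f\,\partial/\partial s^2,X_2]=f\,[\partial/\partial s^2,X_2]-(X_2f)\,\partial/\partial s^2$, non-constancy of $f$ alone is not enough --- you need $X_2f\neq0$ and no cancellation against the first term; and whether $X_2$ has a nonzero $\partial/\partial q^i$ component depends on $\H$, because $(X_2)^i=\partial\H/\partial p^2_i$. Both points are settled by a concrete choice: on the canonical model with $n=1$ take $\H=\frac12\bigl((p^1)^2+(p^2)^2\bigr)$, the integrable solution $X_1=p^1\,\partial/\partial q+\frac12\bigl((p^1)^2+(p^2)^2\bigr)\,\partial/\partial s^1$, $X_2=p^2\,\partial/\partial q$, and $f=q$; then $[X_1+q\,\partial/\partial s^2,\,X_2]=-p^2\,\partial/\partial s^2\neq0$, so the perturbed solution is not integrable.

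The genuine gap is in your last paragraph. The first claim of the remark is an existence claim: there is a solution $\psi$ of \eqref{hec} that is not an integral section of \emph{any} integrable $k$-vector field solving \eqref{fieldcontact}. Your argument that the extension problem is ``jointly overdetermined'' is a heuristic, not a proof: you exhibit neither such a $\psi$ nor an actual obstruction, and overdetermined systems can perfectly well be solvable in particular cases. To close this you would have to produce a concrete $k$-contact Hamiltonian system and a map $\psi$ solving \eqref{hec} whose first-order data along its image are incompatible with the commutation relations $[X_\alpha,X_\beta]=0$ for every admissible extension of the $\psi'_\alpha$. In fairness, the paper does not prove this either --- it imports the statement from the $k$-symplectic theory --- but since your proposal sets out to justify the remark, this part remains unproved as written.
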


One can obtain the following alternative expression for the
Hamilton--De Donder--Weyl equations; its proof is immediate
(the case $k=1$ was done in
\cite{Bravetti2018})
\begin{prop}
The $k$-contact Hamilton--De Donder--Weyl equations \eqref{fieldcontact} are equivalent to 
\beq
\label{eq:kcontact3}
\begin{cases}
\Lie_{X_\alpha}\eta^\alpha = - (\Lie_{\Reeb_\alpha}\H)\eta^\alpha \:, \\
\inn(X_\alpha)\eta^\alpha = - \H 
\:.
\end{cases}
\eeq
\end{prop}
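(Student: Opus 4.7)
The plan is to use Cartan's magic formula $\Lie_X \eta = \inn(X)\d\eta + \d\inn(X)\eta$ applied to each $\eta^\alpha$ with $X = X_\alpha$, and then sum over $\alpha$. This is the natural tool because it directly relates the Lie derivative expression appearing in \eqref{eq:kcontact3} with the interior product against $\d\eta^\alpha$ appearing in \eqref{fieldcontact}.

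First I would observe that the second equation is identical in both formulations, so the only content is to show the equivalence of the first equations under the assumption that the second holds. Summing Cartan's formula over $\alpha$ gives
\[
\Lie_{X_\alpha}\eta^\alpha = \inn(X_\alpha)\d\eta^\alpha + \d\bigl(\inn(X_\alpha)\eta^\alpha\bigr).
\]
Using the common second equation $\inn(X_\alpha)\eta^\alpha = -\H$, the last term becomes $-\d\H$, so
\[
\Lie_{X_\alpha}\eta^\alpha = \inn(X_\alpha)\d\eta^\alpha - \d\H.
\]

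From here the equivalence is immediate: substituting the first equation of \eqref{fieldcontact} into the right-hand side yields $\Lie_{X_\alpha}\eta^\alpha = -(\Lie_{\Reeb_\alpha}\H)\eta^\alpha$, the first equation of \eqref{eq:kcontact3}; conversely, assuming the latter and reading the identity backwards gives $\inn(X_\alpha)\d\eta^\alpha = \d\H - (\Lie_{\Reeb_\alpha}\H)\eta^\alpha$, recovering the first equation of \eqref{fieldcontact}.

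There is essentially no obstacle: the argument is a one-line application of Cartan's formula combined with the scalar equation $\inn(X_\alpha)\eta^\alpha = -\H$. The only mild subtlety, which is worth pointing out in the write-up, is that the Einstein summation convention is in force throughout, so both $\Lie_{X_\alpha}\eta^\alpha$ and $\inn(X_\alpha)\d\eta^\alpha$ are sums over $\alpha$, and consequently the equivalence is between the sums, not between the individual summands.
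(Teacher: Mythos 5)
Your proof is correct and matches the paper's intent: the paper simply declares the proof ``immediate,'' and the immediate argument is precisely the one you give, namely Cartan's formula $\Lie_{X_\alpha}\eta^\alpha = \inn(X_\alpha)\d\eta^\alpha + \d\inn(X_\alpha)\eta^\alpha$ combined with the shared second equation $\inn(X_\alpha)\eta^\alpha = -\H$. Your remark that the equivalence holds for the summed expressions (under the summation convention), not termwise, is a sensible clarification.
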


Finally, we present a sufficient condition for a $k$-vector field to be a solution of the Hamilton--De Donder--Weyl equations \eqref{fieldcontact}
without using the Reeb vector fields $\Reeb_\alpha$:

\begin{thm}
\label{eqset}
Let $(M,\eta^\alpha,\H)$ be a $k$-contact Hamiltonian system.
Consider the 2-forms $\Omega^\alpha= -\H\,\d\eta^\alpha+\d\H\wedge\eta^\alpha$. 
On the open set $O=\{p\in M;\H\neq 0\}$, if a $k$-vector field ${\bf X}=(X_\alpha)$ in $M$  verifies that
\begin{equation}
\label{hamilton-eqs-no-reeb}
    \begin{cases}
        i(X_\alpha)\Omega^\alpha = 0\,,\\
        i(X_\alpha)\eta^\alpha = -\H\,,
    \end{cases}
    \quad  \
\end{equation}
then ${\bf X}$ is a solution of the Hamilton--De Donder--Weyl equations \eqref{fieldcontact}).
\end{thm}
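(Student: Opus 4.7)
The plan is to expand the first equation in \eqref{hamilton-eqs-no-reeb} with the graded Leibniz rule, simplify it via the second equation, and then recover the Reeb-derivative terms (which are absent from the statement) by contracting the resulting identity with each Reeb vector field.

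First I would compute $\inn(X_\alpha)\Omega^\alpha$. Since $\d\H$ is a $1$-form, the Leibniz rule gives
$$\inn(X_\alpha)(\d\H\wedge\eta^\alpha)=X_\alpha(\H)\,\eta^\alpha-(\inn(X_\alpha)\eta^\alpha)\,\d\H,$$
so $\inn(X_\alpha)\Omega^\alpha=0$ reads
$$-\H\,\inn(X_\alpha)\d\eta^\alpha+X_\alpha(\H)\,\eta^\alpha-(\inn(X_\alpha)\eta^\alpha)\,\d\H=0.$$
Plugging in $\inn(X_\alpha)\eta^\alpha=-\H$ and dividing through by $\H$ (legitimate on $O$) produces
$$\inn(X_\alpha)\d\eta^\alpha=\d\H+\frac{X_\alpha(\H)}{\H}\,\eta^\alpha.$$
The second equation of \eqref{fieldcontact} already coincides with the hypothesis, so everything reduces to identifying each coefficient $X_\beta(\H)/\H$ with $-\Reeb_\beta(\H)$.

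The crucial step is to contract the identity above with the Reeb field $\Reeb_\beta$. On the left, interior products anticommute on a $2$-form and $\inn(\Reeb_\beta)\d\eta^\alpha=0$ by \eqref{reebcontact}, so $\inn(\Reeb_\beta)\inn(X_\alpha)\d\eta^\alpha=-\inn(X_\alpha)\inn(\Reeb_\beta)\d\eta^\alpha=0$. On the right, $\inn(\Reeb_\beta)\eta^\alpha=\delta^\alpha_{\,\beta}$ collapses the sum and leaves $\Reeb_\beta(\H)+X_\beta(\H)/\H$. Equating the two sides forces $X_\beta(\H)=-\H\,\Reeb_\beta(\H)$ for every $\beta$, and substituting this back recovers the first equation of \eqref{fieldcontact}.

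The main obstacle is conceptual rather than computational: the Reeb-derivative terms missing from \eqref{hamilton-eqs-no-reeb} must emerge somewhere, and recognising that they are encoded in the functions $X_\alpha(\H)$ and can be isolated by pairing against $\Reeb_\beta$ is the heart of the argument. Once this observation is in hand, everything is forced by the graded Leibniz rule and the defining relations \eqref{reebcontact}. The hypothesis $\H\neq 0$ on $O$ enters in a single essential way, namely to permit the division by $\H$.
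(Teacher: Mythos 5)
Your proof is correct and follows essentially the same route as the paper's: expand $\inn(X_\alpha)\Omega^\alpha=0$ with the Leibniz rule, use $\inn(X_\alpha)\eta^\alpha=-\H$, contract with the Reeb vector fields to identify $X_\beta(\H)=-\H\,\Reeb_\beta(\H)$, and substitute back. The only difference is that you divide by $\H$ before contracting with $\Reeb_\beta$ rather than after, which is immaterial.
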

\begin{proof}
Suppose that ${\bf X}$ satisfies equations \eqref{hamilton-eqs-no-reeb}. Then,
\begin{equation*}
    0 = i(X_\alpha)\Omega^\alpha=-\H\,i(X_\alpha)\d\eta^\alpha+(i(X_\alpha)\d\H)\eta^\alpha+\H\,\d\H\ ,
\end{equation*}
and hence,
\begin{equation}\label{no-reeb-1}
    \H\,i(X_\alpha)\d\eta^\alpha=(i(X_\alpha)\d\H)\eta^\alpha+\H\,\d\H\ .
\end{equation}
Contracting this equation with every Reeb vector field $\Reeb_\beta$,
\begin{equation*}
    0 = \H\,i(\Reeb_\beta)i(X_\alpha)\d\eta^\alpha=
    (i(X_\alpha)\d\H)i(\Reeb_\beta)\eta^\alpha+\H\,i(\Reeb_\beta)\d\H=
    (i(X_\alpha)\d\H)\delta_\beta^\alpha+\H\,i(\Reeb_\beta)\d\H
    \ ,
\end{equation*}
and then  $i(X_\beta)\d\H = -\H\,i(\Reeb_\beta)\d\H$,
for every $\beta$. Using this in equation \eqref{no-reeb-1}, we get
\begin{equation*}
    \H i(X_\alpha)\d\eta^\alpha=\H(\d\H -(i(\Reeb_\alpha)\d\H)\eta^\alpha)=
    \H(\d\H -(\Reeb_\alpha(\H))\eta^\alpha)\ ;
\end{equation*}
and therefore $i(X_\alpha)\d\eta^\alpha=\d\H-(\Reeb_\alpha(\H))\eta^\alpha$.
\end{proof}

Bearing in mind Definition \ref{kconham}
and Proposition \ref{prop1}, 
we can write the equations
for the corresponding integral sections:

\begin{prop}
On the open set $O=\{p\in M;\H\neq 0\}$, if $\psi\colon D\subset\R^k\to O$ is an integral section of a $k$-vector field solution
to equations \eqref{hamilton-eqs-no-reeb}, then it is a solution to
\begin{equation}
 \begin{cases}
i(\psi'_\alpha)\Omega^\alpha = 0 
\:,\\
i(\psi'_\alpha)\eta^\alpha = - \H\circ\psi 
\:.
\end{cases}
\label{hec2}
\end{equation}
\end{prop}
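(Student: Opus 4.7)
The plan is essentially a direct unpacking of what it means for $\psi$ to be an integral section. By Definition \ref{integsect}, the condition $\psi' = \mathbf{X}\circ\psi$ is equivalent to $\psi'_\alpha(t) = X_\alpha(\psi(t))$ for every $\alpha$ and every $t\in D$. Since both $\Omega^\alpha$ and $\eta^\alpha$ are differential forms on $M$, contraction with the tangent vector $\psi'_\alpha(t)$ at $\psi(t)$ coincides with the value at $\psi(t)$ of the 1-form $i(X_\alpha)\Omega^\alpha$ and the function $i(X_\alpha)\eta^\alpha$, respectively.

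So I would proceed in two short steps. First, starting from the hypothesis that $\mathbf{X}=(X_\alpha)$ satisfies \eqref{hamilton-eqs-no-reeb}, I evaluate the identity $i(X_\alpha)\Omega^\alpha=0$ at $\psi(t)$, use that $\psi'_\alpha(t) = X_\alpha(\psi(t))$, and deduce $i(\psi'_\alpha)\Omega^\alpha = 0$ along $\psi$, which is the first equation of \eqref{hec2}. Second, I do the same with $i(X_\alpha)\eta^\alpha = -\H$: evaluating at $\psi(t)$ yields $i(\psi'_\alpha)\eta^\alpha = -\H\circ\psi$, which is the second equation of \eqref{hec2}.

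There is really no obstacle here; the statement is the ``integral section'' version of a pointwise algebraic identity on $M$, and the argument is the same one that is used implicitly in Proposition \ref{prop1} to relate \eqref{hec} and \eqref{fieldcontact}. The only point deserving a brief mention is that the restriction to the open set $O=\{\H\neq 0\}$ plays no role in this implication: it was used in Theorem \ref{eqset} to pass from \eqref{hamilton-eqs-no-reeb} to \eqref{fieldcontact} by dividing by $\H$, but once a solution $\mathbf{X}$ of \eqref{hamilton-eqs-no-reeb} is given on $O$, the passage to integral sections is purely formal. Accordingly, the proof can be written in two lines and requires no further computation.
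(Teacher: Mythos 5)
Your proposal is correct and matches the paper's (implicit) argument: the paper states this proposition without a written proof, treating it as the immediate pointwise evaluation of \eqref{hamilton-eqs-no-reeb} along an integral section, exactly as you do via $\psi'_\alpha(t)=X_\alpha(\psi(t))$. Your observation that integrability and the restriction to $O$ play no role in this one-directional implication is accurate.
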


\section{Examples}
\label{examples}

\subsection{Damped vibrating string}
\label{subsec-string}

It is well known that a vibrating string can be described
within the Lagrangian formalism.
Let us use coordinates $(t,x)$ for the time and the space,
and let $u$ be the separation of a point of the string 
from the equilibrium position;
we also denote by $u_t$ and $u_x$ the derivatives of~$u$
with respect to the independent variables.
Let $\rho$ be the linear mass density of the string
and $\tau$ its tension (they are assumed to be constant).
Taking as Lagrangian density
$$
L = \frac{1}{\,2\,} \rho u_t^2 - \frac{1}{\,2\,} \tau u_x^2
$$
and defining  
$c^2 = \tau/\rho$
one obtains as the Euler--Lagrange equation the wave equation
$$
u_{tt} = c^2 u_{xx} 
\ .
$$

We rather need to express this equation within the Hamiltonian formalism.
We add the momenta of~$u$ as dependent variables $p^t$, $p^x$.
The Legendre transformation $\mathcal{F}\L$ of~$\L$ is such that
$$
\mathcal{F} L^*(p^t) = \rho \,u_t 
\quad ,\quad
\mathcal{F} L^*(p^x) = -\tau \,u_x \,,
$$
and from it we obtain the Hamiltonian function
$$ 
H = \frac{1}{\,2\rho\,} (p^t)^2 - \frac{1}{\,2\tau\,} (p^x)^2 \ .
$$
As we have a scalar field $u$ and two independent variables $(t,x)$,
this corresponds to a $2$-symplectic theory in the canonical model $\oplus^2 \Tan^*\R$.
The Hamilton--De Donder--Weyl equations are
$$
\frac{\partial u}{\partial t} = 
\frac{\partial H}{\partial p^t} \,,\quad
\frac{\partial u}{\partial x} = 
\frac{\partial H}{\partial p^x} \,,\quad
\frac{\partial p^t}{\partial t} + \frac{\partial p^x}{\partial x} =
-\frac{\partial H}{\partial u} \ .
$$
For our Hamiltonian, they read
$$
\frac{\partial u}{\partial t} = 
\frac{p^t}{\rho} 
\,,\quad
\frac{\partial u}{\partial x} = 
-\frac{p^x}{t} 
\,,\quad
\frac{\partial p^t}{\partial t} + \frac{\partial p^x}{\partial x} =
0 \ .
$$
The last equation yields immediately
$\displaystyle
\rho \frac{\partial^2 u}{\partial t^2} 
- \tau \frac{\partial^2 u}{\partial x^2}
= 0
$,
which is the wave equation.

A simple model of a vibrating string with an external damping 
can be obtained by adding to the wave equation
a dissipation term proportional to the speed of an element of the string.
So this is given by the equation
$$
\frac{\partial^2 u}{\partial t^2} 
- c^2 \frac{\partial^2 u}{\partial x^2}
+ k \frac{\partial u}{\partial t}
= 0 \ ,
$$
where $k > 0$ is the damping constant
\cite[p.\,284]{Salsa2015}.
Now we will show that this equation can be formulated as a contact Hamiltonian system.
To this end, according to example \ref{example2}, we add two additional variables $s^t$ and~$s^x$,
and define an extended Hamiltonian
$\H(u,p^t,p^x,s^t,s^x)$
by
$$
\H = H + h
\ ,
$$
where
$\displaystyle
H = \frac{1}{2\rho} (p^t)^2 - \frac{1}{2\tau} (p^x)^2
$
is the Hamiltonian of the undamped vibrating string
and
$$
h = ks^t \ .
$$
Then the contact Hamilton--De Donder--Weyl equations for $H$ read 
$$ 
\begin{cases}
\displaystyle
\frac{\partial u}{\partial t} = \frac1\rho p^t \ ,
\\[1ex]
\displaystyle\frac{\partial u}{\partial x} = -\frac1\tau p^x \ ,
\\[1ex]
\displaystyle\frac{\partial p^t}{\partial t} + \frac{\partial p^x}{\partial x} = 
-k p^t \ ,
\\[1ex]
\displaystyle\frac{\partial s^t}{\partial t} + \frac{\partial s^x}{\partial x} =
\frac{1}{2\rho} (p^t)^2 - \frac{1}{2\tau} (p^x)^2 - k s^t \:.
\end{cases}
$$
Using the first and second equations within the third we obtain
$$
\rho \frac{\partial^2 u}{\partial t^2}
- \tau \frac{\partial^2 u}{\partial x^2 }
+ k\rho \frac{\partial u}{\partial t}
= 0 \:,
$$
which is the equation of the damped string.

\subsection{Burgers' Equation}
\label{subsec-Burgers}

Burgers' equation (Bateman, 1915) is a remarkable nonlinear partial differential equation
that appears in several areas of applied mathematics.
There is one dependent variable $u$ and two independent variables $(t,x)$, and reads
$$
u_t + u u_x = k \,u_{xx}
\ ,
$$
where $k \geq 0$ is a diffusion coefficient
\cite[p.\,217]{Salsa2015}.
Notice that it looks quite similar to the heat equation
$
u_t = k \,u_{xx}
$;
indeed we will show that Burgers' equation can be formulated 
as a contactification of the heat equation.
This will be performed in several steps.

\paragraph{Lagrangian formulation of the heat equation}

We will need a Hamiltonian formulation of the heat equation.
This can be obtained from a Lagrangian formulation of it.
Although the heat equation is not variational,
it can be made variational
by considering an auxiliary dependent variable~$v$,
and taking as Lagrangian
---see for instance~\cite{Ibragimov2004}
$$
L = - k \, u_x v_x 
- \frac12 \left( v u_t - u v_t \right)
\,,
$$
whose Euler--Lagrange equations are
$$
[L]_u =
k \,v_{xx} + v_t = 0
\;,\quad
[L]_v =
k \,u_{xx} - u_t = 0
\,.
$$
The first equation is linear homogeneous, 
therefore always has solutions (for instance $v=0$).
So,
there is a correspondence between 
solutions of the heat equation
and solutions of the Euler--Lagrange equations of~$L$ with $v=0$.

\paragraph{Hamiltonian formulation of the heat equation}

Now we apply the Donder--Weyl Hamiltonian formalism to~$L$.
Its Legendre map (fiber derivative) is a map
$
\mathcal{F} L \colon 
\oplus^2 \Tan \R^2 \to 
P = \oplus^2 \Tan^*\R^2 
$.
The phase space is
$P \approx \R^6$,
where the fields are $u$, $v$ 
and their respective momenta
$p^t,p^x$ and~$q^t,q^x$ 
with respect to the independent variables.
The Legendre map $\mathcal{F} \L$ of~$\L$ relates
these momenta with the configuration fields and their velocities:
$$
\mathcal{F} \L^*(p^x) = - k\,v_x 
\:,\quad
\mathcal{F} \L^*(p^t) = - \frac12 \,v \,,
$$
$$
\mathcal{F} \L^*(q^x) = - k\,u_x 
\:,\quad
\mathcal{F} \L^*(q^t) =  \frac12 \,u \,.
$$
So the image $P_0 \subset P$ of the Legendre map 
is defined by the two constraints
$$
p^t \approx - \frac12 v
\:,\quad
q^t \approx   \frac12 u
\,.
$$
We will use coordinates $(u,v,p^x,q^x)$ on it.
Finally, the Hamiltonian function on $P_0$ is
$$ 
H = - \frac1k \,p^x q^x \,.
$$

The manifold $P$ is endowed with an exact 2-symplectic structure
defined by the canonical 1-forms
$$
p^t \d u + q^t \d v 
\:,\quad
p^x \d u + q^x \d v \:.
$$
Their pullbacks to $P_0$ 
are not anymore a 2-symplectic structure
(under the standard definition),
but nevertheless we have 
two differential 1-forms
$$
\theta^t = \frac{1}{2} (-v \d u + u \d v) 
\:,\quad
\theta^x = p^x \d u + q^x \d v 
$$
with 
$$
\omega^t = -\d \theta^t = 
- \d u \wedge \d v 
\:,\quad
\omega^x = -\d \theta^x = 
\d u \wedge \d p^x + \d v \wedge \d q^x 
\,.
$$
Now, 
let $\psi \colon \R^2 \to P_0$ be a map,
$\psi = (u,v,p^x,q^x)$.
It is readily computed that 
the Hamilton--De Donder--Weyl equation
$$
\inn({\psi_t'}) \omega^t +
\inn({\psi_x'}) \omega^x =
\d H \circ \psi
$$
for $\psi$ is equivalent to
$$
\partial_t v - \partial_x p^x = 0
\,,\
- \partial_t u - \partial_x q^x = 0
\,,\
\partial_x u = -  \frac1k \,q^x
\,,\
\partial_x v = - \frac1k \,p^x
\,.
$$
Using the last two equations in the first two ones,
we obtain the heat equation for~$u$,
and its complementary equation for~$v$:
$$
\partial_t u = k \,\partial_x^2 u
\,,\quad
\partial_t v = -k \,\partial_x^2 v
\,.
$$
Notice again that the equation for~$v$ is linear homogeneous, 
therefore there is a bijection 
between solutions of this system with $v=0$,
and solutions of the heat equation.
So this is the Hamiltonian formulation of the heat equation we sought.

\paragraph{Contact Hamiltonian formulation of the Burgers' equation}

Now we take again the manifold $P_0$ 
and its two differential 1-forms 
to construct a 2-contact structure.
To this end 
we consider the product manifold
$M = P_0 \times \R^2 = \R^6$,
with the cartesian coordinates $s^t,s^x$
of $\R^2$, 
and construct the contact forms
$$
\eta^t = \d s^t - \theta^t
\:,\quad 
\eta^x = \d s^x - \theta^x 
\:,
$$
where we keep the same notation for 
$\theta^t, \theta^x$
as 1-forms on~$M$.
Their differentials are the same 2-forms
$\omega^t, \omega^x$
written before.
With the notations of section~3,
since $\eta^t,\eta^x$ are linearly independent 
at each point,
$
\mathcal{C}^\mathrm{C} =
\langle \eta^t,\eta^x \rangle
$
is a regular codistribution of rank~2,
and  
$
\mathcal{D}^\mathrm{R} =
\langle \Reeb_t,\Reeb_x \rangle
$,
with
$\Reeb_t = \partial/\partial s^t$
and
$\Reeb_x = \partial/\partial s^x$,
is a regular distribution of rank~2.
Moreover,
$
\mathcal{D}^\mathrm{C} \cap \mathcal{D}^\mathrm{R}
= \{0\}
$,
since no nonzero linear combination of the
$\partial/\partial s^t, \partial/\partial s^x$
is annihilated by the contact forms.
Therefore,
$(M;\eta^t,\eta^x)$
is a 2-contact manifold.
This is indeed example~\ref{example3}.

Finally, 
we take as a contact Hamiltonian the function
$$ 
\H = - \frac1k p^x q^x + \gamma u s^x 
.
$$
In this way we obtain a 2-contact Hamiltonian system
$(M,\eta^\alpha,\H)$.
\\
Let us compute the contact Hamilton--De Donder--Weyl equations for this system,
\begin{align*}
\inn({\psi_t'})\omega^t +
\inn({\psi_x'})\omega^x 
&=
\d \H 
- (\Lie_{\Reeb_t}\H) \, \eta^t
- (\Lie_{\Reeb_x}\H) \, \eta^x 
\,,
\\
\inn({\psi_t'}) \eta^t +
\inn({\psi_x'}) \eta^x &=
-\H
\,.
\end{align*}
The first equation is similar to the
contactless Hamilton--De Donder--Weyl equation,
with just some additional terms:
$$
\partial_t v - \partial_x p^x = 
\gamma (s^x + u p^x)
\,,\
- \partial_t u - \partial_x q^x = 
 \gamma u q^x
\,,\
\partial_x u = - \frac1k \,q^x
\,,\
\partial_x v = -\frac 1k \,p^x
\,.
$$
Again, putting the latter two equations
inside the former ones,
we obtain
$$
\partial_t u -\gamma k \,u \,\partial_x u = 
k \,\partial_x^2 u
\,,\quad
\partial_t v + \gamma k  \,u \,\partial_x v = 
-k \,\partial_x^2 v + \gamma s^x
\,.
$$
Setting the value of the constant $\gamma = -1/k$,
the first equation is Burgers' equation for~$u$.
\\
Finally, 
the second contact Hamilton--De Donder--Weyl equation reads:
$$
\partial_t s^t -
\frac{1}{2} ( -v \, \partial_t u + u \, \partial_t v )
+
\partial_x s^x
- p^x \,\partial_x u - q^x \, \partial_x v = 
\frac1k \,p^x q^x - \gamma \, u s^x
\,.
$$
Again, notice that these equations admit solutions 
$(u,v,p^x,q^x,s^t,q^t)$
with
$v,p^x,s^t,s^x = 0$,
$q^x = -k \,\partial_x u$,
and $u$ an arbitrary solution of Burgers' equation.
Therefore, we conclude that the Burgers' equation
can be described by the 2-contact Hamiltonian system
$(M,\eta^\alpha,\H)$.

\section{Symmetries and dissipation laws}
\label{sect-symms}

Finally, 
we introduce some basic ideas about symmetries and 
deduce an associated dissipation law 
for $k$-contact Hamiltonian field theories.

\subsection{Symmetries}

There are different concepts of symmetry of a problem, 
depending on which structure was preserved. 
One can put the emphasis on 
the transformations that preserve the geometric structures of the problem,
or on the transformations that preserve its solutions
\cite{Gracia2002}.
This has been done in particular for the $k$-symplectic
Hamiltonian formalism
\cite{Roman2007}.
We will apply the same idea to $k$-contact Hamiltonian systems.
First we consider symmetries as those transformations
that map solutions of the equations into other solutions.
So we define:

\begin{dfn}
\label{def-dynsym}
Let  $(M,\eta^\alpha,\H)$ be a $k$-contact Hamiltonian system.
\begin{itemize}
\item 
A \textbf{dynamical symmetry} is a diffeomorphism $\Phi:M\rightarrow M$ such that, 
for every solution $\psi$ to the $k$-contact Hamilton--De Donder--Weyl equations \eqref{fieldcontact}, 
$\Phi\circ\psi$ is also a solution.
\item 
An \textbf{infinitesimal dynamical symmetry} is a vector field $Y\in\mathfrak{X}(M)$ whose local flow is made of local dynamical symmetries.
\end{itemize}
\end{dfn}

We will give a characterization of symmetries in terms of $k$-vector fields.
First let us recall the following fact about integral sections:

\begin{lem}
Let $\Phi\colon M\to M$ be a diffeomorphism and $\mathbf{X}=(X_1,\dots,X_k)$ 
a $k$-vector field in $M$. If $\psi$ is an
integral section of $\mathbf{X}$, then $\Phi\circ\psi$ is an integral section of $\Phi_*\mathbf{X}=(\Phi_*X_\alpha)$.
In particular, if ${\bf X}$ is integrable then
$\Phi_*{\bf X}$ is also integrable.
\end{lem}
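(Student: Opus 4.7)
The plan is to reduce the statement to a routine chain rule computation, together with a transport-of-integral-sections argument for the integrability assertion.

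First I would unwind the definitions. Recall that $\psi$ being an integral section of $\mathbf{X}$ means, for each $\alpha$,
\[
\Tan\psi \circ \frac{\partial}{\partial t^\alpha} = X_\alpha \circ \psi,
\]
and, by definition, $\Phi_* X_\alpha = \Tan\Phi \circ X_\alpha \circ \Phi^{-1}$. The goal is therefore to check the analogous identity for $\Phi\circ\psi$ with vector field $\Phi_* X_\alpha$.

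The key step is a direct application of the functoriality of the tangent map. Applying $\Tan\Phi$ to the integral section condition for $\psi$ and using $\Tan(\Phi\circ\psi) = \Tan\Phi\circ\Tan\psi$ gives
\[
\Tan(\Phi\circ\psi)\circ \frac{\partial}{\partial t^\alpha}
= \Tan\Phi \circ X_\alpha \circ \psi
= \Tan\Phi \circ X_\alpha \circ \Phi^{-1} \circ (\Phi\circ\psi)
= (\Phi_* X_\alpha) \circ (\Phi\circ\psi),
\]
which is exactly the condition that $\Phi\circ\psi$ is an integral section of $\Phi_*\mathbf{X}$.

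For the integrability claim, I would argue pointwise: given any $q\in M$, the point $\Phi^{-1}(q)\in M$ lies, by hypothesis, in the image of some integral section $\psi$ of $\mathbf{X}$; by the first part, $\Phi\circ\psi$ is an integral section of $\Phi_*\mathbf{X}$ whose image contains $q$. Hence every point of $M$ is covered by an integral section of $\Phi_*\mathbf{X}$, which is the definition of integrability given in Definition \ref{integsect}. (Alternatively, one could invoke $\Phi_*[X_\alpha,X_\beta]=[\Phi_*X_\alpha,\Phi_*X_\beta]$ together with the Frobenius-type characterization of integrability recalled right after Definition \ref{integsect}.) There is no real obstacle here; the only thing to be careful about is distinguishing $X_\alpha\circ\psi$ from $X_\alpha\circ\Phi^{-1}\circ(\Phi\circ\psi)$ in the computation above, so the base-point arguments line up correctly.
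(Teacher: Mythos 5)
Your proof is correct. Note that the paper itself gives no proof of this lemma (it is stated as a recalled fact about integral sections, with the proof left implicit), but your argument — the functoriality computation $\Tan(\Phi\circ\psi)\circ\partial/\partial t^\alpha = \Tan\Phi\circ X_\alpha\circ\psi = (\Phi_*X_\alpha)\circ(\Phi\circ\psi)$ for the first claim, and the pointwise transport of integral sections through $\Phi^{-1}$ for integrability — is exactly the standard argument the authors intend, and it matches the definitions of integral section and integrability given in Definition~\ref{integsect}.
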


Then we have:

\begin{prop}
\label{prop-sym1}
If $\Phi \colon M \rightarrow M$ is a dynamical symmetry then, 
for every integrable $k$-vector field ${\bf X}$ solution to 
the $k$-contact Hamilton--De Donder--Weyl equations \eqref{fieldcontact}, 
$\Phi_*{\bf X}$ is another solution.

On the other side, if $\Phi$ transforms every $k$-vector field ${\bf X}$ 
solution to equations \eqref{fieldcontact} into another solution, 
then for every integral section $\psi$ of ${\bf X}$, we have that 
$\Phi\circ\psi$ is
a solution to the $k$-contact Hamilton--De Donder--Weyl equations \eqref{hec}.
\end{prop}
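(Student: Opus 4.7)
The proposition has two directions, both of which I would reduce to a routine interplay between Proposition \ref{prop1} and the preceding Lemma on the pushforward of integral sections.

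For the first statement, the plan is as follows. Start with an integrable $k$-vector field ${\bf X}$ solving \eqref{fieldcontact}. First apply the ``only if'' direction of Proposition \ref{prop1} to conclude that every integral section $\psi$ of ${\bf X}$ satisfies \eqref{hec}. The dynamical symmetry hypothesis then says that $\Phi\circ\psi$ also satisfies \eqref{hec}, while the Lemma identifies $\Phi\circ\psi$ as an integral section of $\Phi_*{\bf X}$, which is integrable. Finally apply the ``if'' direction of Proposition \ref{prop1} to $\Phi_*{\bf X}$: integrability of ${\bf X}$ ensures that every point of $M$ lies in the image of some integral section of ${\bf X}$, hence (as $\Phi$ is a diffeomorphism) every point of $M$ lies in the image of some integral section of $\Phi_*{\bf X}$ on which \eqref{hec} holds. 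This pointwise information is precisely what is needed to recover the global equations \eqref{fieldcontact} for $\Phi_*{\bf X}$.

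For the second statement, no integrability assumption is needed a priori. Given an integral section $\psi$ of any ${\bf X}$ solving \eqref{fieldcontact}, evaluating \eqref{fieldcontact} pointwise along $\psi$ and using $\psi'_\alpha = X_\alpha\circ\psi$ immediately gives \eqref{hec} for $\psi$. By the Lemma, $\Phi\circ\psi$ is an integral section of $\Phi_*{\bf X}$, and the hypothesis tells us $\Phi_*{\bf X}$ is also a solution of \eqref{fieldcontact}; the same pointwise contraction now yields \eqref{hec} for $\Phi\circ\psi$.

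I do not anticipate a genuine obstacle, since the argument is formal. The single point requiring care is Remark \ref{remark1}: solutions of \eqref{hec} need not arise from integrable $k$-vector fields solving \eqref{fieldcontact}, so the equivalence provided by Proposition \ref{prop1} is only available when integrability is explicitly assumed. This is why the first statement must be phrased for integrable ${\bf X}$, and why in the second statement one cannot invert the arrow and conclude a dynamical symmetry in the sense of Definition \ref{def-dynsym} from the hypothesis on $k$-vector fields — only solutions of \eqref{hec} coming from some ${\bf X}$-integral section are affected, not arbitrary solutions of \eqref{hec}.
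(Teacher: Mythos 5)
Your proposal is correct and follows essentially the same route as the paper: both parts reduce to the Lemma on pushforwards of integral sections combined with Proposition \ref{prop1}, with the first part argued pointwise by choosing, for each $x\in M$, an integral section of $\mathbf{X}$ through $\Phi^{-1}(x)$. Your closing observation about the asymmetry caused by Remark \ref{remark1} correctly identifies why the equivalence cannot be upgraded to a full characterization of dynamical symmetries.
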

\begin{proof}
$( \Rightarrow )$ 
Let $x\in M$ and let $\psi$ be an
integral section of ${\bf X}$ through the point $\Phi^{-1}(x)$, that is $\psi(t_o)=\Phi^{-1}(x)$.
We know that
$\psi$ is a solution to the $k$-contact Hamilton--De Donder--Weyl equations \eqref{hec} and,
as $\Phi$ is a dynamical symmetry, $\Phi\circ \psi$ is also a solution
to equations (\ref{hec}). 
But, by the preceding lemma,
it is an integral section of $\Phi_*{\bf X}$ trough
the point $\Phi(\psi(t_o))=\Phi(\Phi^{-1}(x))=x$
and
hence we have that $\Phi_* {\bf X}$  must be a solution
to equations \eqref{fieldcontact} at the
points $(\Phi\circ\psi)(t)$ and,
in particular, at the point $(\Phi\circ\psi)(t_o)=x $.

$( \Leftarrow )$ 
On the other side, let
${\bf X}$ be a solution to equations \eqref{fieldcontact} and
$\psi\colon U\subset\R^k\to M$ an
integral section of ${\bf X}$.
Then,  by hypothesis, $\Phi_*{\bf X}$ is also a solution and then,
as a consequence of the previous lemma, 
we have that $\Phi\circ\psi$
is a solution to the $k$-contact Hamilton--De Donder--Weyl equations (\ref{hec}).
\end{proof}

In geometrical physics, 
among the most relevant symmetries 
there are those that let the geometric structures invariant:

\begin{dfn}
\label{def-hamconsym}
Let  $(M,\eta^\alpha,\H)$ be a $k$-contact Hamiltonian system.

\noindent
A \textbf{Hamiltonian $k$-contact symmetry} is a diffeomorphism $\Phi\colon M\rightarrow M$ such that
$$
\Phi^*\eta^\alpha=\eta^\alpha
\:,\quad 
\Phi^*\mathcal{H}=\mathcal{H}
\:.
$$
An \textbf{infinitesimal Hamiltonian $k$-contact symmetry} is a vector field $Y\in \X(M)$ whose local flow is a local Hamiltonian $k$-contact symmetry; that is,
$$
\Lie_Y\eta^\alpha=0
\:,\quad 
\Lie_Y\mathcal{H}=0 
\:.
$$
\end{dfn}

\begin{prop}
\label{prop-sym2}
Every (infinitesimal) Hamiltonian $k$-contact symmetry preserves the Reeb vector fields, that is; $\Phi_*\Reeb_\alpha=\Reeb_\alpha$ (or $[Y,\Reeb_\alpha]=0$).
\end{prop}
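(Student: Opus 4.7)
The plan is to use the fact that the Reeb vector fields are uniquely characterized by the defining relations \eqref{reebcontact} of Theorem \ref{reebvf}. So I will show that $\Phi_*\Reeb_\alpha$ satisfies those same defining conditions, and then invoke uniqueness to conclude $\Phi_*\Reeb_\alpha = \Reeb_\alpha$. Note that the condition $\Phi^*\H = \H$ will not be used here: preservation of the contact forms is enough.

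For the finite case, the key calculation relies on the naturality identity
\[
(\Phi^{-1})^*\bigl(\inn(X)\omega\bigr) = \inn(\Phi_*X)\,(\Phi^{-1})^*\omega
\]
valid for any vector field $X$ and differential form $\omega$. Applying $\Phi^*\eta^\alpha = \eta^\alpha$ (hence also $(\Phi^{-1})^*\eta^\alpha = \eta^\alpha$, and likewise for $\d\eta^\alpha$), I would compute
\[
\inn(\Phi_*\Reeb_\beta)\eta^\alpha
= (\Phi^{-1})^*\bigl(\inn(\Reeb_\beta)\eta^\alpha\bigr) = \delta^\alpha_{\,\beta},
\qquad
\inn(\Phi_*\Reeb_\beta)\d\eta^\alpha
= (\Phi^{-1})^*\bigl(\inn(\Reeb_\beta)\d\eta^\alpha\bigr) = 0.
\]
By the uniqueness part of Theorem \ref{reebvf}, this forces $\Phi_*\Reeb_\beta = \Reeb_\beta$.

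For the infinitesimal case, since the local flow $\Phi_t$ of $Y$ consists of Hamiltonian $k$-contact symmetries, differentiating at $t=0$ gives $\Lie_Y\eta^\alpha = 0$, and hence $\Lie_Y\d\eta^\alpha = \d\Lie_Y\eta^\alpha = 0$. I would then use the standard commutator formulas
\[
\inn([Y,\Reeb_\alpha])\eta^\beta
= \Lie_Y\inn(\Reeb_\alpha)\eta^\beta - \inn(\Reeb_\alpha)\Lie_Y\eta^\beta
= \Lie_Y\delta^\beta_{\,\alpha} - 0 = 0,
\]
\[
\inn([Y,\Reeb_\alpha])\d\eta^\beta
= \Lie_Y\inn(\Reeb_\alpha)\d\eta^\beta - \inn(\Reeb_\alpha)\Lie_Y\d\eta^\beta = 0.
\]
Thus $[Y,\Reeb_\alpha]$ lies simultaneously in $\mathcal{D}^{\mathrm{C}}$ (annihilated by all $\eta^\beta$) and in $\mathcal{D}^{\mathrm{R}}$ (annihilated by all $\d\eta^\beta$, using that these two distributions span via the decomposition $\Tan M = \mathcal{D}^{\mathrm{C}} \oplus \mathcal{D}^{\mathrm{R}}$). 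By condition (iii) of Definition \ref{kconman}, $[Y,\Reeb_\alpha] = 0$, as required.

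There is no real obstacle here; the argument is essentially bookkeeping with the $\inn$–$\Lie$–$\d$ identities and the uniqueness clause of Theorem \ref{reebvf}. The only mild subtlety is recognizing that $[Y,\Reeb_\alpha]$ being annihilated by every $\d\eta^\beta$ precisely means it lies in the Reeb distribution $\mathcal{D}^{\mathrm{R}}$, so that the transversality condition (iii) can be invoked to finish the proof.
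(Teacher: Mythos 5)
Your proof is correct and follows essentially the same route as the paper: for the finite case you use naturality of the interior product under $\Phi$ together with $\Phi^*\eta^\alpha=\eta^\alpha$ and the uniqueness clause of Theorem \ref{reebvf}, exactly as the paper does. For the infinitesimal case, which the paper dismisses as ``immediate from the definition,'' you correctly fill in the details by showing $[Y,\Reeb_\alpha]\in\mathcal{D}^{\mathrm{C}}\cap\mathcal{D}^{\mathrm{R}}=\{0\}$ (mirroring the paper's own proof that the Reeb vector fields commute), and your observation that $\Phi^*\H=\H$ is not needed is accurate.
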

\begin{proof}
We have that
\beann
\inn(\Phi^{-1}_*\Reeb_\alpha)(\Phi^*\d\eta^\alpha)&=&
\Phi^*\inn(\Reeb_\alpha)\d\eta^\alpha=0 \:, \\
\inn(\Phi^{-1}_*\Reeb_\alpha)(\Phi^*\eta^\alpha)&=&
\Phi^*\inn(\Reeb_\alpha)\eta^\alpha=1 \:,
\eeann
and, as $\Phi^*\eta^\alpha=\eta^\alpha$ and the Reeb vector fields are unique, from these equalities
we conclude that $\Phi_*\Reeb_\alpha=\Reeb_\alpha$.

The proof for the infinitesimal case is immediate from the definition.
\end{proof}

Finally, as a consequence of these results, we obtain the relation between Hamiltonian $k$-contact symmetries and dynamical symmetries:

\begin{prop}
\label{prop-sym3}
(Infinitesimal) Hamiltonian $k$-contact symmetries are (infinitesimal) dynamical symmetries.
\end{prop}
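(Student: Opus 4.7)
My plan is to verify directly that the $k$-contact Hamilton--De Donder--Weyl equations \eqref{hec} are preserved by $\Phi$, so $\Phi \circ \psi$ is a solution whenever $\psi$ is, which is the definition of a dynamical symmetry. The ingredients to combine are (a) $\Phi^*\eta^\alpha = \eta^\alpha$ and $\Phi^*\H = \H$ from Definition \ref{def-hamconsym}, together with its consequences $\Phi^*\d\eta^\alpha = \d\eta^\alpha$ and $\Phi^*\d\H = \d\H$; and (b) $\Phi_*\Reeb_\alpha = \Reeb_\alpha$, which is given by Proposition \ref{prop-sym2}. From (b) and $\Phi^*\H = \H$ one obtains that the function $\Lie_{\Reeb_\alpha}\H$ is also $\Phi$-invariant, since $(\Lie_{\Reeb_\alpha}\H)\circ\Phi = (\Phi^*\Reeb_\alpha)(\Phi^*\H) = \Reeb_\alpha(\H)$.

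The core computation is then to transport the equations at the point $\Phi(\psi(t))$ back to $\psi(t)$ via $T\Phi$. For any tangent vector $w$ at $\psi(t)$, the chain rule gives $\tilde\psi'_\alpha(t) = T\Phi\cdot \psi'_\alpha(t)$ for $\tilde\psi = \Phi\circ\psi$, so
\[
(\d\eta^\alpha)_{\Phi(\psi(t))}\bigl(T\Phi\cdot\psi'_\alpha(t),\,T\Phi\cdot w\bigr)
= (\Phi^*\d\eta^\alpha)_{\psi(t)}(\psi'_\alpha(t),w)
= (\d\eta^\alpha)_{\psi(t)}(\psi'_\alpha(t),w),
\]
and similarly $i(\tilde\psi'_\alpha)\eta^\alpha = i(\psi'_\alpha)\eta^\alpha$ after pullback. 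Using the invariances of $\d\H$, $\eta^\alpha$, $\Lie_{\Reeb_\alpha}\H$ and $\H$, the HDW equations for $\tilde\psi$ at $\Phi(\psi(t))$ pull back via $T\Phi$ to exactly the HDW equations for $\psi$ at $\psi(t)$, which hold by hypothesis. Since $T\Phi$ is a linear isomorphism, the equations for $\tilde\psi$ hold as well. Thus $\Phi$ is a dynamical symmetry.

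For the infinitesimal case, let $Y$ be an infinitesimal Hamiltonian $k$-contact symmetry, with local flow $\{\Phi_s\}$. The conditions $\Lie_Y\eta^\alpha = 0$ and $\Lie_Y\H = 0$ are equivalent to saying that each $\Phi_s$ (where defined) satisfies $\Phi_s^*\eta^\alpha = \eta^\alpha$ and $\Phi_s^*\H = \H$, i.e., is a local Hamiltonian $k$-contact symmetry. By the finite case just proved, each $\Phi_s$ is then a local dynamical symmetry, so by Definition \ref{def-dynsym} its infinitesimal generator $Y$ is an infinitesimal dynamical symmetry.

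The main delicate point is the bookkeeping for the tangent maps $\tilde\psi'_\alpha = T\Phi\circ\psi'_\alpha$: this is not a pushforward of a vector field on $M$ (since $\psi'_\alpha$ is only defined along $\psi$), so one cannot simply invoke the identity $\Phi^*(i(Z)\omega) = i(\Phi^{-1}_*Z)\Phi^*\omega$ for vector fields. Instead the argument must be carried out pointwise, as indicated above. Once that is clear, everything else is naturality of $\d$ and pullback together with the invariance statement of Proposition \ref{prop-sym2}.
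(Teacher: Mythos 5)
Your proposal is correct and follows essentially the same route as the paper: pull back the Hamilton--De Donder--Weyl equations for $\Phi\circ\psi$ through $\Phi$, using $\Phi^*\eta^\alpha=\eta^\alpha$, $\Phi^*\H=\H$, and the invariance of the Reeb vector fields from Proposition \ref{prop-sym2} to see that they reduce to the equations for $\psi$. The paper writes this as the identity $\inn(\Phi_*(\psi'_\alpha))((\Phi^{-1})^*\eta^\alpha)=(\Phi^{-1})^*\inn(\psi'_\alpha)\eta^\alpha$, which is exactly the pointwise computation you carry out; your extra care about $\psi'_\alpha$ not being a vector field on $M$ is a welcome clarification rather than a divergence.
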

\begin{proof} Let $\psi$ be a solution to the $k$-contact-De Donder--Weyl equations \eqref{hec}, and $\Phi$ a Hamiltonian $k$-contact symmetry. Then
\beann{}
\inn((\Phi\circ\psi)'_\alpha)\eta^\alpha &=&
\inn(\Phi_*(\psi'_\alpha))((\Phi^{-1})^*\eta^\alpha)=
(\Phi^{-1})^*\inn(\psi'_\alpha)\eta^\alpha
\\ &=& (\Phi^{-1})^*(-\H\circ\psi)=
-\H\circ(\Phi\circ\psi) \ ,
\\
\inn((\Phi\circ\psi)'_\alpha)\d\eta^\alpha&=&
\inn(\Phi_*(\psi'_\alpha))((\Phi^{-1})^*\d\eta^\alpha)=
(\Phi^{-1})^*\inn(\psi'_\alpha)\d\eta^\alpha
\\
&=&(\Phi^{-1})^*\Big((\d\H - (\Lie_{\Reeb_\alpha}\H)\eta^\alpha)\circ\psi\Big)
\\
&=&\Big(\d(\Phi^{-1})^*\H - (\Lie_{(\Phi^{-1})^*\Reeb_\alpha}(\Phi^{-1})^*\H)(\Phi^{-1})^*\eta^\alpha\Big)\circ(\Phi\circ\psi)
\\
&=&\Big(\d\H - (\Lie_{\Reeb_\alpha}\H)\eta^\alpha\Big)\circ(\Phi\circ\psi) \ .
\eeann{}
The proof for the infinitesimal case is immediate from the definition.
\end{proof}

\subsection{Dissipation laws}

In many mechanical systems without dissipation, we are interested in quantities which are conserved along a solution. 
Classical examples are the energy or the different momenta. From a physical point of view, if a system has dissipation, these quantities are not conserved. 
This behavior is explicitly shown for Hamiltonian contact systems in the so called energy dissipation theorem 
\cite{Lainz2018} 
which says that,
if $X_\H$ is a contact Hamiltonian vector field, then
$$
\Lie_{X_\H} \H = -(\Lie_{\Reeb}\H)\H .
$$
This equation shows that, in a contact system, 
the dissipations are exponential with rate $-(\Lie_{\Reeb}\H)$. 
In dissipative field theories, a similar structure can be observed in the first equation of \eqref{eq:kcontact3}, 
which can be interpreted as the dissipation of the contact forms $(\eta^\alpha)$.
Then, bearing in mind the definition of conservation law for field theories
as stated in
\cite{Olver1986},
and taking into account the Remark \ref{remark1},
this suggests the following definitions of dissipation laws for $k$-contact Hamiltonian systems:

\begin{dfn}
\label{def-disip}
Let $(M,\eta^\alpha,\mathcal{H})$ be a $k$-contact Hamiltonian system.
A map 
$F\colon M\rightarrow\R^k$,
$F=(F^1,\dots,F^k)$, is said to satisfy:
\ben
\item
The \textbf{dissipation law for sections} if, for every solution $\psi$ to the $k$-contact Hamilton--De Donder--Weyl equations (\ref{hec}), 
the divergence of 
$
F \circ \psi = (F^\alpha\circ\psi) \colon \R^k \to \R^k
$,
which is defined as usual by
$\ds
\mathrm{div} (F \circ \psi) = 
{\partial (F^\alpha {\circ}\, \psi)}/{\partial t^\alpha}
$,
satisfies that
\begin{equation}
\label{cons-law}
\mathrm{div} (F \circ \psi) =
- \left[\strut (\Lie_{\Reeb_\alpha}\H) F^\alpha \right] \circ \psi
\ .
\end{equation}
    \item 
The \textbf{dissipation law for $k$-vector fields} if, for every $k$-vector field ${\bf X}$  solution to the $k$-contact Hamilton--De Donder--Weyl equations (\ref{fieldcontact}),
the following equation holds:
\beq
\label{cons-law field}
\Lie_{X_\alpha}F^\alpha=-(\Lie_{\Reeb_\alpha}\H)F^\alpha\ .
\eeq
\end{enumerate}
\end{dfn}

Both concepts are partially related by the following property:

\begin{prop}
\label{prop-disip}
If $F=(F^\alpha)$ satisfies the dissipation law for sections then,  
for every integrable $k$-vector field ${\bf X}=(X_\alpha)$
which is a solution to the $k$-contact Hamilton--De Donder--Weyl equations \eqref{fieldcontact}, 
we have that the equation
\eqref{cons-law field}
holds for ${\bf X}$.

On the other side, if \eqref{cons-law field} holds for a $k$-vector field ${\bf X}$, then
\eqref{cons-law} holds for every integral section $\psi$ of ${\bf X}$.
\end{prop}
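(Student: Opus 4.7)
The plan is to connect the two formulations through a single chain-rule identity for the divergence of $F\circ\psi$ along an integral section, and then use integrability plus Proposition \ref{prop1} to lift a pointwise identity from images of sections to all of~$M$.

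First I would establish the bridging identity: if $\psi \colon D \subset \R^k \to M$ is an integral section of $\mathbf{X}=(X_1,\dots,X_k)$, then for every smooth function $G \colon M \to \R$,
\begin{equation*}
\frac{\partial (G \circ \psi)}{\partial t^\alpha}(t)
= \d G_{\psi(t)} \bigl( \psi'_\alpha(t) \bigr)
= \d G_{\psi(t)} \bigl( X_\alpha(\psi(t)) \bigr)
= (\Lie_{X_\alpha} G)\circ \psi\,(t) \,.
\end{equation*}
Summing over $\alpha$ with $G = F^\alpha$ gives the key relation
\begin{equation*}
\mathrm{div}(F \circ \psi)
= \sum_\alpha \frac{\partial (F^\alpha \circ \psi)}{\partial t^\alpha}
= \bigl( \Lie_{X_\alpha} F^\alpha \bigr) \circ \psi \,.
\end{equation*}

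For the first implication, assume $F$ satisfies the dissipation law for sections, and let $\mathbf{X}$ be an integrable $k$-vector field solution to \eqref{fieldcontact}. By Proposition~\ref{prop1}, every integral section $\psi$ of~$\mathbf{X}$ is a solution to \eqref{hec}, hence by hypothesis equation \eqref{cons-law} applies; combining it with the bridging identity yields
\begin{equation*}
(\Lie_{X_\alpha} F^\alpha)\circ \psi
= -\bigl[(\Lie_{\Reeb_\alpha}\H)\,F^\alpha\bigr]\circ\psi
\end{equation*}
on the domain of~$\psi$. Since $\mathbf{X}$ is integrable, every point of~$M$ lies in the image of some integral section, so the equality of composed functions passes to a pointwise equality of functions on~$M$, giving \eqref{cons-law field}.

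The converse is an even shorter application of the bridging identity: assuming \eqref{cons-law field} and taking any integral section $\psi$ of~$\mathbf{X}$, we get
\begin{equation*}
\mathrm{div}(F \circ \psi)
= (\Lie_{X_\alpha} F^\alpha)\circ \psi
= -\bigl[(\Lie_{\Reeb_\alpha}\H)\,F^\alpha\bigr]\circ \psi \,,
\end{equation*}
which is exactly \eqref{cons-law}. There is no serious obstacle here; the only subtle point, worth flagging explicitly, is that integrability of~$\mathbf{X}$ is genuinely needed in the first direction in order to cover all of~$M$ by images of integral sections (which matches the caveat in Remark~\ref{remark1} about \eqref{hec} and \eqref{fieldcontact} not being fully equivalent), while the reverse direction requires no such hypothesis.
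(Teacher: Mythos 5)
Your proposal is correct and follows essentially the same route as the paper: the chain-rule identity $\mathrm{div}(F\circ\psi)=(\Lie_{X_\alpha}F^\alpha)\circ\psi$ along integral sections, Proposition \ref{prop1} to know those sections solve \eqref{hec}, and integrability of $\mathbf{X}$ to cover all of $M$. The only difference is that you spell out the bridging identity explicitly, which the paper compresses into a single chain of equalities.
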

\begin{proof}
If $F=(F^\alpha)$ satisfies the dissipation law for sections,
${\bf X}=(X_\alpha)$ is an integrable $k$-vector field
which is a solution to the $k$-contact Hamilton--De Donder--Weyl equations \eqref{fieldcontact}, 
and $\psi\colon\R^k\to M$ an integral section of ${\bf X}$, then
by Proposition \ref{prop1}, $\psi$ is a solution to the
$k$-contact Hamilton--De Donder--Weyl equations \eqref{hec};
therefore 
$$
(\Lie_{X_\alpha}F^\alpha) \circ \psi =
\frac{\d}{\d t^\alpha} (F^\alpha \circ \psi) =
\mathrm{div} (F \circ \psi)=
- \left[\strut (\Lie_{\Reeb_\alpha}\H) F^\alpha \right] \circ \psi
\ ,
$$
and, as ${\bf X}$ is integrable, there exists an integral section through every point,
hence the result follows.

On the other side, if \eqref{cons-law field} holds, then the statement is a straightforward consequence of the above expression.
\end{proof}

\begin{lem}
\label{braket}
If $Y$ is an infinitesimal dynamical symmetry then, for every solution ${\bf X}=(X_\alpha)$
to the $k$-contact Hamilton--De Donder--Weyl equations \eqref{fieldcontact}, we have that
$$
\inn([Y,X_\alpha])\eta^\alpha=0 
\:,\quad
\inn([Y,X_\alpha])\d\eta^\alpha=0 
\:.
$$
\end{lem}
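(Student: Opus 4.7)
The plan is to exploit the fact that the flow $\{\Phi_t\}$ of $Y$ preserves the solution set of the Hamilton--De Donder--Weyl equations, linearize this invariance at $t=0$, and then rewrite the result via the commutator identity $\inn([Y,X_\alpha]) = \Lie_Y \circ \inn(X_\alpha) - \inn(X_\alpha)\circ\Lie_Y$.

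Since $Y$ is an infinitesimal dynamical symmetry, its local flow $\Phi_t$ consists of local dynamical symmetries (Definition \ref{def-dynsym}). By Proposition \ref{prop-sym1}, for any (integrable) solution $\mathbf{X}$ of \eqref{fieldcontact}, the push-forward $(\Phi_t)_*\mathbf{X}$ is again a solution. Writing those equations and pulling back along $\Phi_t$, using the naturality $\Phi_t^*\,\inn(T\Phi_t\,\cdot)\,\omega = \inn(\cdot)\,\Phi_t^*\omega$, I obtain
$$
\inn(X_\alpha)\Phi_t^*\eta^\alpha = -\Phi_t^*\H, \qquad \inn(X_\alpha)\Phi_t^*\d\eta^\alpha = \Phi_t^*\bigl(\d\H - (\Lie_{\Reeb_\alpha}\H)\eta^\alpha\bigr).
$$
Differentiating at $t=0$ and using $\frac{d}{dt}\big|_{t=0}\Phi_t^* = \Lie_Y$ gives the linearized identities
$$
\inn(X_\alpha)\Lie_Y\eta^\alpha = -\Lie_Y\H, \qquad \inn(X_\alpha)\Lie_Y\d\eta^\alpha = \Lie_Y\bigl(\d\H - (\Lie_{\Reeb_\alpha}\H)\eta^\alpha\bigr).
$$

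Combining these with the Hamilton--De Donder--Weyl equations $\inn(X_\alpha)\eta^\alpha=-\H$ and $\inn(X_\alpha)\d\eta^\alpha=\d\H-(\Lie_{\Reeb_\alpha}\H)\eta^\alpha$ for $\mathbf{X}$ and the commutator identity above, a direct computation yields
\begin{align*}
\inn([Y,X_\alpha])\eta^\alpha &= \Lie_Y\bigl(\inn(X_\alpha)\eta^\alpha\bigr) - \inn(X_\alpha)\Lie_Y\eta^\alpha = \Lie_Y(-\H) - (-\Lie_Y\H) = 0,
\end{align*}
and the entirely parallel calculation with $\d\eta^\alpha$ in place of $\eta^\alpha$ (using the first equation of \eqref{fieldcontact} and the second linearized identity) delivers $\inn([Y,X_\alpha])\d\eta^\alpha = 0$.

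The main obstacle is that Proposition \ref{prop-sym1} only converts the section-wise notion of symmetry into a push-forward statement on $k$-vector fields when $\mathbf{X}$ is integrable. To cover the general case stated in the lemma, I would observe that the intermediate identities involving $\Lie_Y\eta^\alpha$ and $\Lie_Y\d\eta^\alpha$ are pointwise linear conditions on $X_\alpha|_x$ (no derivatives of $X_\alpha$ appear), so it suffices to note that any pointwise admissible value can be realized by some local integrable solution of \eqref{fieldcontact} through $x$, thereby extending the identities, and hence the conclusion of the lemma, to all solutions.
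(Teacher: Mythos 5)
Your proof is correct and is essentially the paper's own argument: both proofs linearize, at $\varepsilon=0$, the invariance of the solution set of \eqref{fieldcontact} under the flow of $Y$, and your bookkeeping via the commutator identity $\inn([Y,X_\alpha])=\Lie_Y\inn(X_\alpha)-\inn(X_\alpha)\Lie_Y$ is just a rewriting of the paper's difference-quotient computation $\inn([Y,X_\alpha])\eta^\alpha=\lim_{\varepsilon\to0}\varepsilon^{-1}\bigl(\inn(F_\varepsilon^*X_\alpha)\eta^\alpha-\inn(X_\alpha)\eta^\alpha\bigr)=0$. The one point where you go beyond the paper is your closing paragraph: the claim that every pointwise admissible value of a solution is realized by a local \emph{integrable} solution is asserted rather than proved, and it is not obvious in view of Remark \ref{remark1}; note, however, that the paper's own proof silently makes the same leap, since it applies the pushforward-of-solutions property (Proposition \ref{prop-sym1}) to an arbitrary, possibly non-integrable, solution $\mathbf{X}$.
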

\begin{proof}
Let $F_\varepsilon$ be the local $1$-parameter groups of diffeomorphisms generated by $Y$. 
As $Y$ is an infinitesimal dynamical symmetry, 
$
\inn(F_{\varepsilon}^*  X_\alpha)\eta^\alpha  =
\inn(X_\alpha)\eta^\alpha
$, 
because both are solutions to the Hamilton--De Donder--Weyl equations \eqref{fieldcontact}. 
Then, as the contraction is a continuous operation,
\begin{equation*}
\inn([Y,X_\alpha])\eta^\alpha
=
\inn \left( \lim_{\varepsilon \to 0} 
\frac{F_{\varepsilon}^* X_\alpha - X_\alpha}{\varepsilon}
\right)\eta^\alpha
=
\lim_{\varepsilon \to 0}
\frac{\inn(F_{\varepsilon}^*X_\alpha)\eta^\alpha  - \inn(X_\alpha)\eta^\alpha}{\varepsilon}
=
0\:.
\end{equation*}
The equation involving $\d \eta^\alpha$ is proved in the same way.
\end{proof}

Then we have the following fundamental result which associates dissipation laws for $k$-vector fields with symmetries:

\begin{thm}[Dissipation theorem]
\label{theor-disip}
If $Y$ is an infinitesimal dynamical symmetry, 
then $F^\alpha=-i(Y)\eta^\alpha$ 
satisfies the dissipation law for $k$-vector fields.
\end{thm}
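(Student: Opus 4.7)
The plan is to compute $\Lie_{X_\alpha} F^\alpha$ directly from the definition $F^\alpha = -\inn(Y)\eta^\alpha$, summed over $\alpha$, and show it equals $-(\Lie_{\Reeb_\alpha}\mathcal{H})F^\alpha$ whenever $\mathbf{X}=(X_\alpha)$ solves the $k$-contact Hamilton--De Donder--Weyl equations. The two inputs are Lemma \ref{braket}, which will absorb the bracket terms produced when Lie derivatives and contractions are swapped, and the alternative form \eqref{eq:kcontact3} of the field equations, which encodes how $\mathcal{L}_{X_\alpha}\eta^\alpha$ is controlled by the Hamiltonian.

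First I would apply the standard commutator identity $\Lie_X \inn(Y) = \inn(Y)\Lie_X + \inn([X,Y])$ with $X = X_\alpha$ and the same $Y$ acting on $\eta^\alpha$, and sum over $\alpha$. This gives
\[
\sum_\alpha \Lie_{X_\alpha} \inn(Y)\eta^\alpha \;=\; \inn(Y)\!\left(\sum_\alpha \Lie_{X_\alpha}\eta^\alpha\right) + \sum_\alpha \inn([X_\alpha,Y])\eta^\alpha .
\]
The second summand vanishes by Lemma \ref{braket}, since $\inn([Y,X_\alpha])\eta^\alpha=0$ (Lemma \ref{braket} gives exactly this cross contraction, summed on $\alpha$, with a sign that cancels against $[X_\alpha,Y]=-[Y,X_\alpha]$).

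Next I would invoke the equivalent formulation \eqref{eq:kcontact3} of the field equations, which says $\Lie_{X_\alpha}\eta^\alpha = -(\Lie_{\Reeb_\alpha}\mathcal{H})\eta^\alpha$. Substituting this into the first summand produces
\[
\sum_\alpha \Lie_{X_\alpha} \inn(Y)\eta^\alpha \;=\; -\inn(Y)\!\left(\sum_\alpha (\Lie_{\Reeb_\alpha}\mathcal{H})\eta^\alpha\right) \;=\; -\sum_\alpha (\Lie_{\Reeb_\alpha}\mathcal{H})\,\inn(Y)\eta^\alpha ,
\]
since the Reeb derivatives of $\mathcal{H}$ are just functions that pull out of $\inn(Y)$. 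Multiplying by $-1$ and using $F^\alpha = -\inn(Y)\eta^\alpha$ recovers $\Lie_{X_\alpha}F^\alpha = -(\Lie_{\Reeb_\alpha}\mathcal{H})F^\alpha$, which is \eqref{cons-law field}.

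I don't expect any serious obstacle: the whole argument is algebraic manipulation of Cartan calculus identities, with all the geometric content concentrated in Lemma \ref{braket} (the symmetry hypothesis) and in \eqref{eq:kcontact3} (the field equations). The only point that deserves a moment of care is that $\Lie_{X_\alpha}F^\alpha$ denotes a sum on $\alpha$, so the identities must be applied only after summing, because individually neither $\Lie_{X_\alpha}\eta^\alpha = -(\Lie_{\Reeb_\alpha}\mathcal{H})\eta^\alpha$ nor $\inn([Y,X_\alpha])\eta^\alpha = 0$ holds term by term.
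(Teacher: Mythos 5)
Your proof is correct and follows essentially the same route as the paper's: both apply the commutator identity $\Lie_{X_\alpha}\inn(Y)\eta^\alpha=\inn([X_\alpha,Y])\eta^\alpha+\inn(Y)\Lie_{X_\alpha}\eta^\alpha$ (summed over $\alpha$), kill the bracket term with Lemma \ref{braket}, and substitute the field equation $\Lie_{X_\alpha}\eta^\alpha=-(\Lie_{\Reeb_\alpha}\H)\eta^\alpha$ from \eqref{eq:kcontact3}. Your closing remark that the cancellations hold only after summing over $\alpha$ is a correct and worthwhile point of care that the paper leaves implicit.
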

\begin{proof}
Consider a solution $X$ to the 
$k$-contact Hamilton--De Donder--Weyl equations \eqref{fieldcontact}.
From Lemma \ref{braket}, 
we have that 
$\inn([Y,X_\alpha])\eta^\alpha=0$, 
therefore
$$
    \Lie_{X_\alpha} (i(Y)\eta^\alpha) =
    \inn([X_\alpha,Y])\eta^\alpha+\inn(Y)\Lie_{X_\alpha}\eta^\alpha
    =
    -(\Lie_{\Reeb_\alpha}\H)\inn(Y)\eta^\alpha
    \,.
$$
\end{proof}

\subsection{Examples}

\subsubsection{Mechanics: energy dissipation}

In this case $k=1$. Let $X_h$ be the Hamiltonian contact vector field. 
Then, as $[X_h,X_h]=0$,
the vector field $X_h$ is a dynamical symmetry. 
Then, applying the dissipation theorem we have that  $-\inn(X_h)\eta=\H$
satisfies the dissipation law
$$
\Lie_{X_h}\H=-(\Lie_{\Reeb}\H)\H \ ,
$$
which is the energy dissipation theorem \cite{Lainz2018}.

\subsubsection{Damped vibrating string}

We resume the example discussed in Section~\ref{subsec-string}.
The vector field
$\displaystyle\frac{\partial}{\partial q}$
is a contact symmetry. Hence
it induces the map
$$
F=\Big( 
-\inn\Big(\frac{\partial}{\partial q}\Big)\eta^t, 
-\inn\Big(\frac{\partial}{\partial q}\Big)\eta^x
\Big)
=
(p^t,p^x) \;,
$$
which follows the dissipation law
 \eqref{cons-law field}:
$$
\Lie_{X_t}p^t+\Lie_{X_x}p^x =
-(\Lie_{\Reeb_t}\H)p^t-(\Lie_{\Reeb_x}\H)p^x=-2\gamma p^t \,.
$$
Over a solution $(X_t,X_x)$, this law is
$$
p^t_t+p^x_x=-2\gamma p^t \,.
$$

\subsubsection{Burgers' equation}

Now we take up the example discussed in Section~\ref{subsec-Burgers}.

The vector field $\displaystyle\frac{\partial}{\partial v}$ is not a contact symmetry
but a dynamical symmetry.
Hence
it induces the map
$$
F=\Big(
-\inn\Big(\frac{\partial}{\partial v}\Big)\eta^t,
-\inn\Big(\frac{\partial}{\partial v}\Big)\eta^x
\Big) =
\Big(\frac{1}{2k}u,p^x_v\Big)\ .
$$
which follows the dissipation law
 \eqref{cons-law field}:
$$
\Lie_{X_t}\frac{1}{2k}u+\Lie_{X_x}p^x_v=
-(\Lie_{\Reeb_t}\H)\frac{1}{2k}u-(\Lie_{\Reeb_x}\H)p^x_v \ .
$$
Over a solution $(X_t,X_x)$ this law is
$$
\frac{1}{2k} \frac{\partial u}{\partial t}+
\frac{\partial p^x_v}{\partial x} =
\gamma u p^x_v \ ,
$$
which is the Burgers' equation again.

\subsubsection{A model of two coupled vibrating strings with damping}

Consider $M=\oplus^2 \Tan^*\mathbb{R}^2 \times \R^2$, 
with coordinates $(t,x,q^1,q^2,p^t_1,p^t_2,p^x_1,p^x_2,s^t,s^x)$, 
where $q^1$ and $q^2$ represent the displacements of both strings.
When it is endowed with the forms
\begin{eqnarray*}
\eta^t =\d s^t-p^t_1\d q^1-p^t_2\d q^2
\,,\quad
\eta^x =\d s^x-p^x_1\d q^1-p^x_2\d q^2 
\,,
\end{eqnarray*}
we have the $2$-contact manifold $(M,\eta^t,\eta^x)$. Now consider the Hamiltonian function
$$
\H=
\frac12((p^t_1)^2+(p^t_2)^2+(p^x_1)^2+(p^x_2)^2)+
G(z)+\gamma s^t \:,
$$
where $G$ is a function that represents a coupling of the two strings, 
and which we assume to depend only on 
$z = \left( (q^2)^2+(q^1)^2 \right)^{1/2}$.

A simple computation shows that the following vector field is an infinitesimal contact symmetry
$$
Y=q^1\derpar{}{q^2}-q^2\derpar{}{q^1}+p^t_1\derpar{}{p^t_2}-p^t_2\derpar{}{p^t_1}+p^x_1\derpar{}{p^x_2}-p^x_2\derpar{}{p^x_1}\ ,
$$
and it induces the map
$$
F=(-\inn(Y)\eta^t,-\inn(Y)\eta^x)=(q^1p^t_2-q^2p^t_1,q^1p^x_2-q^2p^x_1)\ ,
$$
which satisfies the dissipation equation along a solution $(X^t,X^x)$
\begin{align*}
    \Lie_{X^x}(q^1p^t_2-q^2p^t_1)+\Lie_{X^t}(q^1p^x_2-q^2p^x_1)&=
    q^1\Big(\derpar{p^t_{2}}{t}+\derpar{p^x_{2}}{x}\Big)-q^2
    \Big(\derpar{p^t_{1}}{t}+\derpar{p^x_{1}}{x}\Big)\\
    &=-\gamma(q^1p^t_2-q^2p^t_1)\ .
\end{align*}

\section{Conclusions and outlook}

In this paper we have introduced a Hamiltonian formalism
for field theories with dissipation.
Using techniques from contact geometry and 
the $k$-symplectic Hamiltonian formalism,
we have developed a new geometric framework,
defining the concepts of $k$-contact manifold
and $k$-contact Hamiltonian system.
In the same way that a contact structure allows to describe dissipative mechanics,
a $k$-contact structure gives a transparent description of dissipative
systems in field theory
over a $k$-dimensional parameter space.
It is important to stress that,
to our knowledge, this is the first time 
these geometric structures are presented.

In more detail, we have stated the definition of $k$-contact structure on a manifold,
as a family of $k$ differential 1-forms
satisfying certain properties
(Definition~\ref{kconman}).
This implies the existence of two special tangent distributions,
in particular the Reeb distribution,
which is spanned by $k$ Reeb vector fields.
We have proved the existence of special systems of coordinates,
and a Darboux-type theorem for a particular kind of these manifolds.

Using this structure,
the notion of $k$-contact Hamiltonian system is defined.
The corresponding field equations 
(Definition~\ref{kconham})
are a generalization of both 
the contact Hamilton equations of dissipative mechanics,
and 
the Hamilton--De Donder--Weyl equations of Hamiltonian field theory.

We have analyzed the concept of symmetry for dissipative Hamiltonian field theories.
We study two natural types of symmetries:
those preserving the solutions to the field equations, 
and those preserving the geometric $k$-contact structure and the Hamiltonian function.
We have also defined the notion of dissipation law
in order to extend 
the energy dissipation theorem of contact mechanics,
stating a dissipation theorem
which relates symmetries and dissipation laws 
which is analogous to the conservation theorems 
in the case of conservative field theories.

Two relevant examples 
show the significance of our framework:
the vibrating string with damping, 
and the Burgers equation.
In our presentation, Burgers' equation is obtained as a
contactification of the heat equation;
to this end, we have also provided, for the first time,
a Hamiltonian field theory describing the heat equation.

The results of this work open some future lines of research.
The first one would be the definition of
the Lagrangian formalism for dissipative field theories 
and the associated Hamiltonian formalism.
The case of a singular Lagrangian
would require to define the notion of
\emph{$k$-precontact structure} on a manifold,
that is,
a family of $k$ 1-forms $\eta^\alpha$
that do not meet all the conditions of Definition~\ref{kconman}.
(The case $k=1$ has been recently analyzed in \cite{DeLeon2019}.)
Another interesting issue would be to deepen the study of symmetries for $k$-contact Lagrangian and Hamiltonian systems.

\subsection*{Acknowledgments}
We acknowledge the financial support from the 
Spanish Ministerio de Econom\'{\i}a y Competitividad
project MTM2014--54855--P, 
the Ministerio de Ciencia, Innovaci\'on y Universidades project
PGC2018-098265-B-C33,
and the Secretary of University and Research of the Ministry of Business and Knowledge of
the Catalan Government project
2017--SGR--932.

\bibliographystyle{abbrv}
\addcontentsline{toc}{section}{References}
\itemsep 0pt plus 1pt
\small

\end{document}